\newtheorem{proposition}{Proposition}
\newtheorem*{proof*}{Proof}
\begin{document}
\title{Analyzing the Trade-offs in Using Millimeter Wave Directional Links for High Data Rate Tactile Internet Applications}
\author{\IEEEauthorblockN{Kishor Chandra Joshi, Solmaz Niknam, R. Venkatesha Prasad and Balasubramaniam Natarajan, \textit{Senior Member, IEEE}}
%\footnotetext{Manuscript received April 17, 2019; revised June 15, 2019; accepted June 25, 2019. Date of publication XXXX, 2019; date of current version XXXX 2019.\\
\thanks{Kishor Chandra Joshi is with CNRS/CentraleSupelec, University of Paris-Saclay, Paris, France and TU Delft, Netherlands.}
\thanks{R. Venkatesha Prasad is with TU Delft, Netherlands.}
\thanks{Solmaz Niknam is with Virginia Tech.}
\thanks{Balasubramaniam Natarajan is with Kansas State University.}}
\maketitle
\begin{abstract}
Ultra-low latency and high reliability communications are the two defining characteristics of  Tactile Internet (TI). Nevertheless, some TI applications would also require high data-rate  transfer of audio-visual information to complement the haptic data. Using Millimeter wave (mmWave) communications  is an attractive choice for high datarate TI applications due to the  availability of large bandwidth in the mmWave bands. Moreover, mmWave radio access is also advantageous to attain the air-interface-diversity required for high reliability in TI systems as mmWave signal propagation significantly differs to sub-6\,GHz
 propagation. However, the use of narrow beamwidth in mmWave systems makes them susceptible to link misalignment-induced unreliability and high access latency.  In this paper, we analyze the trade-offs between high gain of narrow beamwidth antennas and corresponding susceptibility to misalignment in  mmWave links. To alleviate the effects of random antenna misalignment, we propose a beamwidth-adaptation scheme that significantly stabilize the link throughput performance.
\end{abstract}
\section{Introduction}
Tactile Internet (TI) aims to enable  interaction with remote environments in perceived real-time through the delivery of haptic information over wired/wireless networks. This has led to the two main requirements for network infrastructure facilitating TI, i.e., ultra-low latency and  high reliability~\cite{fettweis2014tactile}.
Although haptic information may be encoded in a few bytes, there are several applications such as robotic surgery, autonomous driving, where transfer of  high quality audio-visual information would be quintessential to complement the  haptic information~\cite{TI_IEEE}. This would add another requirement for TI, i.e., sustained high data-rate communication links. Consequently, such applications would require wireless interfaces that are highly-reliable, can provide ultra-low latency, and are also able to sustain high data rates in order of Multi-Gbps.  

 To achieve the carrier-grade reliability over wireless links, potential solutions include frequency diversity using multiple uncorrelated links and spatial diversity using simultaneous connectivity to multiple spatially-uncorrelated base stations\cite{aijaz2017realizing}. These approaches are also referred as interface-diversity techniques~\cite{interfaceDiversity}. To achieve the latency requirements at link-layer, short packet length is proposed to reduce the total transmission interval instead of transmitting long packets~\cite{massiveURRLC, MAC_UURLLC}. 

In the context of fifth-generation (5G) networks, millimeter Wave (mmWave) frequency band (30\,GHz to 300\,GHz) has emerged as a promising candidate for multi-Gbps wireless connectivity due to availability of large bandwidth chunks in mmWave frequency band~\cite{5G1,cogcel}. Since radio-interface diversity is important from reliability perspective, using mmWave radio-interfaces would certainly benefit the TI applications as mmWave signals propagation is highly likely to be uncorrelated to sub-6\,GHz signal propagation due to its differing propagation characteristics. In this context we envision a hybrid radio access architecture to support TI applications where sub-6\,GHz access will be used for transmitting haptic information while mmWave access will be used for high data-rate transmission of audio-visual information. However, using mmWave communications for TI application is not straight-forward. Specifically, the use of narrow beamwidth directional antennas, which is necessary to combat high-pathloss at mmWave frequencies~\cite{mmwavebeamforming}, can result in frequent link outages due to antenna misalignment~\cite{7158056}. Thus providing a sustained high data-rate radio interface for TI applications at mmWave frequencies is not a straight-forward task. \normalcolor

Ideally, a narrow beamwidth antenna should always result in better signal quality compared to a broader beamwidth antenna as antenna gain and beamwidth are reciprocal to each other. However, in practical situations, following  two factors have a significant impact on  the performance of a narrow beamwidth directional link:\newline
(i)~\textit{\textbf{Beam misalignment}} --  the alignment of Tx and Rx antennas can be disturbed  by many factors such as device holding pattern, random user movements, orientation error and  vibrations that would result in unstable link quality. Since link availability is one of the most important indicator of link reliability~\cite{availabilityAnalysis}, it is highly important to analyze the impact of beam alignment on link quality for different transmitter and receiver beamwidths. This is specifically interesting for TI applications where link outages are not at all desired.\normalcolor\newline
(ii)~\textit{\textbf{Beam setup time}}-- finding best Tx and Rx antenna directions is essential to establish a mmWave link. mmWave standards IEEE 802.15.3c~\cite{iee:IEEE802.15.3c}) and  (IEEE 802.11ad~\cite{IEEE802.11ad} have proposed beamforming-protocols to find the Tx and Rx beams that result in best signal quality.  The beam-search space is inversely proportional to the beamwidths of Tx and Rx antennas. In the case of narrow beamwidths, a significant fraction of allocated time-slot can be exhausted  finding the best Tx/Rx beams. Hence the beam-searching can be a considerable overhead in narrow-beamwidth mmWave links. 

Thus the use of directional antennas at mmWave frequencies impacts both the reliability and latency performance which are of paramount importance to TI applications. Further, the TI applications do involve movement, at least within a constrained domain, thus proper beam forming and alignment is a must. It is important to ensure that the random misalignment does not impact  the link performance; and when link disruption is unavoidable, beam setup procedure should be fast enough to minimize the impact of outage.

The available literature on  mmWave communications has mainly dealt with capacity analysis or beamforming design. There are very few papers that consider mmWave communications for ultra low latency and high reliability communication (URLLC) scenarios~\cite{Mezzavilla_latency,Ming_latency,Ming_latency2,debbah_latency,ShivendraPanwar}. In \cite{Mezzavilla_latency}, various challenges of achieving URLLC at mmWave band are highlighted. It is argued that a significant reworking of the entire protocol stack (short frame size at physical layer, dynamic MAC protocols, moving content closer to edge, etc.) is required if mmWave radio access is used for URLLC. In \cite{Ming_latency}, it is shown that cooperative networking can significantly improve the latency performance of  mmWave based heterogeneous networks. In \cite{Ming_latency2}, two strategies, namely, traffic dispersion and networking densification are proposed to reduce the end to end in mmWave wireless networks. Here dispersion stands for offloading traffic to different spatial paths using distributed antenna systems while densification refers to increasing the density of mmWave BSs. It is shown that both the strategies improve  latency performance for a given sum budget power. In \cite{debbah_latency}, the problem
of reliability and latency in mmWave massive
multiple-input multiple-output (MIMO) networks is studied by using the Lyapunov technique that follows utility-delay control approach  and successfully adapts to channel variations and the queue dynamics. In \cite{ShivendraPanwar}, the feasibility of using mmWave access for URLLC considering dynamic blockages is considered. It is shown that the optimal BS deployment is driven by reliability and latency constraints instead of coverage and rate requirements.

 All the above papers follow a system level modeling approach and do not account for the impact of beamwidth on performance of individual links considering medium access control~(MAC) overheads amounting to use of directional antennas. In \cite{adaptive, 80211ad}, the performance evaluations of contention access in IEEE 802.11ad MAC protocol is done assuming coarse-sector beamwidths and without considering alignment and searching overheads.   There are few papers that separately consider the beam searching overhead~\cite{efficientbeamswitch,BBS_Nitsche,beamsearchingKTH} or the misalignment~\cite{Misalignment_Wildeman, Ergodic_RWHeath, yu2017coverage}. In \cite{efficientbeamswitch}, an efficient beam switching mechanism is proposed that utilizes a modified Rosenbrock numerical algorithm to select the best beam-pair. Using the direction estimates at 2.4 and 5\,GHz, it is possible to infer the coarse beam directions of 60\,GHz links~\cite{BBS_Nitsche}. Since direction estimates at 2.4/5\,GHz is obtained using passively-heard frames, it enormously reduces the searching overheads. In \cite{liu2017millimeter}, it is shown that the exhaustive search outperforms hierarchal search in terms of asymptotic misalignment probability. In \cite{beamsearchingKTH},  capacity of 60\,GHz link is analysed considering beam-searching overhead while using an ideal flat-top antenna model. This paper does not take into account the impact of antenna misalignment. Moreover, the flat-top antenna model assumed in this paper  transforms the continuously varying antenna gains into binary  values that are not suitable for analysing the impact of misalignment.  Our work is close to this paper, however we consider a Gaussian antenna model and  include misalignment errors as well into our link modeling framework. In \cite{Ergodic_RWHeath}, it is numerically demonstrated that the Gaussian main-lobe antenna pattern, which we have adopted in this paper, can represent the mmWave directional antennas with a reasonable accuracy.The  rate analysis of mmWave mobile network using  stochastic geometry is presented in \cite{Misalignment_Wildeman}. However,  beam set up overhead is ignored and mainly employ flat-top antenna models thus unable to provide accurate impact of misalignment. In \cite{yu2017coverage}, authors also use stochastic geometry framework and employ Sinc and Cosine antenna patterns at BSs to evaluate the impact of misalignment. However, beam-searching overhead is not considered. 

  In summary, despite the plethora of recent literature on mmWave communications, there is no prior work that has holistically considered the consequential trade-offs of  using narrow beamwidths, i.e., increase in antenna gains at the cost of highly un-sustained link quality that is bound to increase the susceptibility to link unreliability and high-latency. \normalcolor In this paper we aim to fill this gap. Our main contributions are as follows.
\begin{itemize}
\item {We develop a novel link capacity optimization framework that shows the existence of optimum TX/Rx beamwidths. Our model jointly considers the beam-misalignment, beam-searching overhead and allocated  time-slot length to model the capacity of mmWave links.}
\item {We show that although the narrow beams  result in high throughput links, the resulting links are highly susceptible to beam-misalignment. Such links are highly un-reliable and are not suitable to support the TI applications.}
\item { We propose a beamwidth-adaptation scheme that is misalignment-aware and  significantly stabilizes the quality of mmWave links and result in (50\% to 150\%) improvement in average link capacity.}
\end{itemize}  
 \begin{figure*}[!t]
    \centering
    %\mbox{
    \subfigure[Network architecture.]{ 
    %\centering
    \includegraphics[width=0.36\textwidth]{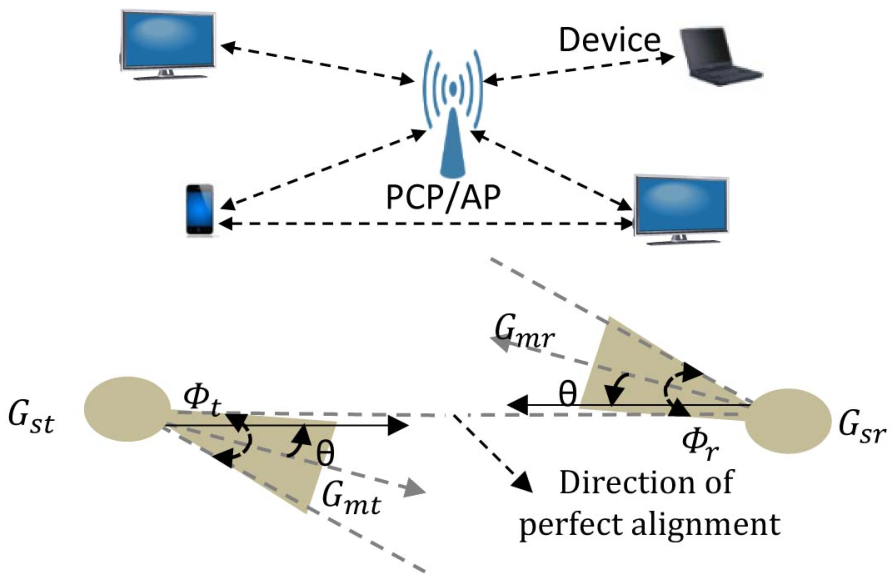}
    \label{fig_systemmodel_21}
    }
    \subfigure[Beacon Interval.]{
    %\centering
    \includegraphics[width=0.36\textwidth]{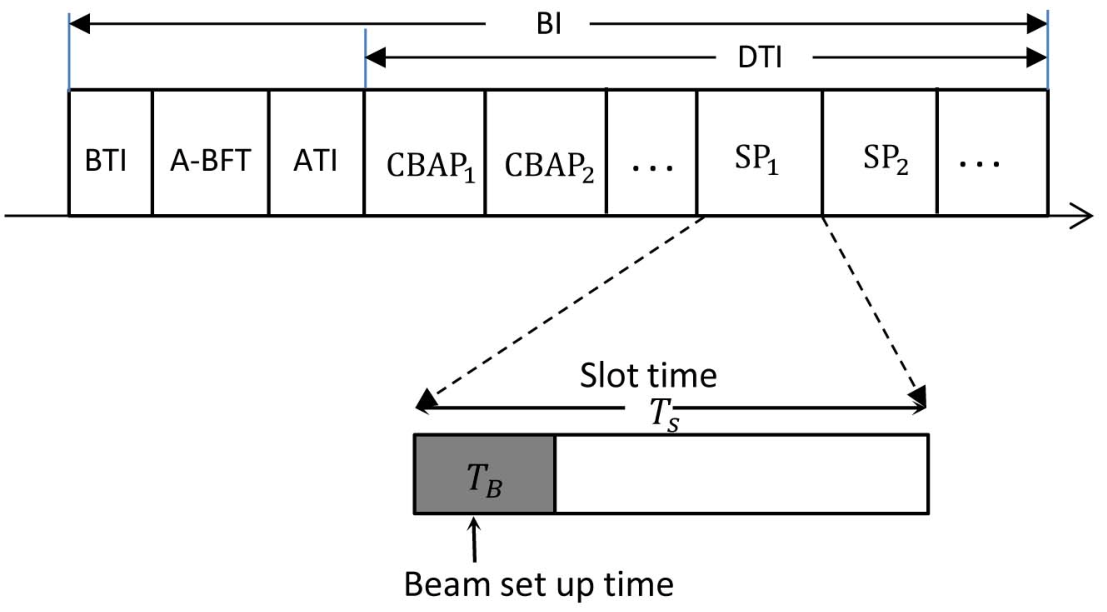}
    \label{fig:systemmodel_22}
    }
    %}
    \caption{IEEE 802.11ad network architecture and beacon interval.}
    \label{fig:systemmodel_2}
    \end{figure*}
\section{Preliminaries of IEEE 802.11ad}\label{sec:tradeoffs_descript}
Fig.~\ref{fig_systemmodel_21} shows an IEEE 802.11ad Personal Basic Service Set (PBSS) formed by the 60\,GHz stations (STAs) where one of them acts as PBSS Control Point/Access Point (PCP/AP). 
IEEE 802.11ad employs a hybrid medium access control protocol consisting of both contention-based channel access (CSMA/CA) and fixed TDMA based channel access.  Fig.~\ref{fig:systemmodel_22} shows the timings in an IEEE 802.11ad MAC that are based on beacon intervals~(BI) which consists of: (i)~beacon transmission interval~(BTI); (ii)~association beamforming training (A-BFT) where STAs associate with the PCP/AP; (iii)~announcement time interval~(ATI) where the exchange of management information between PCP/AP and the STAs happens; and (iv)~data transfer interval~(DTI), which consists of contention-based access periods (CBAPs) and fixed-access service periods~(SPs). CBAPs use CSMA/CA based channel access mechanism while during SPs, TDMA access mechanism is used. In this paper, we mainly focus on the SPs part of the BI where  high speed data transmission using narrow beamwidth happens.
% para-space%

\subsection{Antenna model}
% para-space%
To examine the effect of beam alignment errors on the communication link performance,  detailed mathematical models of antennas are needed. Generally, capturing all  essential characteristics of the real-world antennas in a simple analytical expressions/model is difficult. The most widely used cone-plus-circle antenna model~\cite{beamsearchingKTH}  assumes constant gains for both the main and the side lobes. The assumption of constant gain for main lobe  makes it unsuitable for examining the effect of misalignment. In this case, any alignment error  smaller than the  beamwidth of  main lobe would be unnoticed. We use a relatively pragmatic antenna model employing the Gaussian main lobe radiation pattern, which is proposed in IEEE 802.15.3c~\cite{iee:IEEE802.15.3c}. Let $G_m^{\phi}(\theta)$ and $	G_s^{\phi}$ represent the mainlobe and sidelobe gains, respectively. The analytical model is represented by,
\begin{small}
	\begin{align}\label{Eq:GaussianGain}
	G^{\phi}(\theta)=
	\begin{cases}
	\begin{array}{ll}
	G_m^{\phi}(\theta)=\left(\frac{1.6162}{\sin(\frac{\phi}{2})}\right)^2e^{-K_1{4\log_e(2)}\left(\frac{\theta}{\phi}\right)^2},&|\theta|\leq 1.3\phi,  \\
	G_s^{\phi}=e^{-2.437}\phi^{-0.094},&|\theta|>1.3\phi.
	\end{array}
	\end{cases}
	\end{align}
\end{small}
Where, $\phi$ is the half power beamwidth (HPBW) beamwidth and $\theta$ is the deviation angle of antenna boresight from axis of perfect alignment.
The main lobe beamwidth $\phi_{ML}$ (defined as frequency intervals between -20\,dB gain levels relative to the peak gain) is approximately  $\phi_{ML}=2.6\phi$. 

\begin{figure}[h]
	\centering
		\psfrag{x}[cc][cc][0.75]{\small{Misalignment, $\theta$\textdegree}}
	\psfrag{y}[cc][cc][0.75]{\small{Gain (dB)}}
	\includegraphics[width=0.27\textwidth]{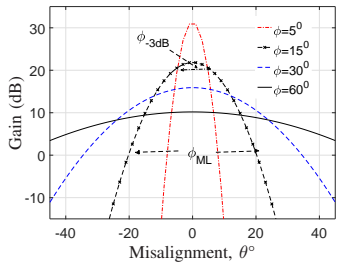}
	\caption{Gain vs beam misalignment.}
	\label{fig_GainMisalignment}
\end{figure}
The  the main lobe gain  with varying  alignment error for different HPBW $\phi$ is shown in Fig.~\ref{fig_GainMisalignment}. We  can observe that the gain curve  of smaller beamwidth decays faster than the wider beamwidth. This implies that the  narrow beam (high gain) antennas are most affected by the alignment errors. Consequently, in presence of alignment errors, the effective gains of Tx and Rx  would be way less than the expected gains. Using the Gaussian main lobe offers different gains for different value of alignment errors as opposed to the ideal cone-plus-circle antenna model where antenna gain has binary values, thus making it possible to track the impact of alignment errors on the link quality.  

It should be noted that if  maximum possible misalignment, denoted as $|\theta_{max}|$, is greater than the half of main lobe beamwidth, antenna pointing error of main lobe would lead to link failure as antennas would no longer be able to communicate using the main lobe. Therefore, to ensure the link availability, \begin{small}$\label{LowerLimitBeamwidth}\frac{\phi_{ML}}{2}\ge |\theta_{max}|$\end{small} must be satisfied.
\subsection{Beam Training procedure}\label{sec:beamtraining_overhead}  IEEE  IEEE 802.11ad  use a 2-level beamforming protocol.  In the $1^{st}$ stage, devices pair that want to establish a connection start scanning, using wider beamwidths called quasi-omni (QO) levels or sector levels. Generally, the beamwidth of sector-level could be 180$\degree$ or 90$\degree$. Once the best Tx and Rx sectors are found,  high resolution beams (1\textdegree-10\textdegree) are used in the $2^{nd}$ stage of beamforming procedure.

 Fig.~\ref{fig_antpattarens} depicts  the sector and beam level beamwidths where  many fine beams are contained within a coarse sector. To identify the best Tx and Rx beams, training sequences are transmitted in all the possible directions.  Let  the sector level Tx and Rx beamwidths are denoted by $\Omega_t$ and $\Omega_r$, respectively. Let  the beam level Tx and Rx beamwidth are denoted by $\phi_t$ and $\phi_r$, respectively.  Then, the total number of beam directions to be probed becomes  $\frac{\Omega_t}{\phi_t}$ Tx  and $\frac{\Omega_r}{\phi_x}$ Rx beams, respectively. Therefore, the total time required to search the best Tx and Rx beams (represented by $T_B$) can be given as,
\begin{small}\begin{equation}\label{Eq:timeslotAllocated}
T_B^{\phi_t, \phi_r}=\left(\frac{2\pi}{\Omega_t}+\frac{2\pi}{\Omega_r}\right)T_p+\left(\frac{\Omega_t}{\phi_t}+\frac{\Omega_r}{\phi_r}\right)T_p,
\end{equation}\end{small}
here, $T_p$ represents the transmission time of a beam training sequence.  According to IEEE 802.11ad, the $1^{st}$ stage beamforming is performed during A-BFT period while the $2^{nd}$ stage beamforming may be performed during SP or during A-BFT. To consider the effect of beam-training overhead on the effective  channel time available for data transmission, we assume that both stages of beamforming are performed during SP.\normalcolor

As depicted in Fig.~\ref{fig:systemmodel_22}, a  slot $T_S$ granted to a device pair is divided into  two parts: (i)~$T_B$ which is used for beam training; and (ii)~$T_S-T_B$ which is used for data transmission. This implies that the beam training time $T_B$ impacts the effective transmission time.  From (\ref{Eq:timeslotAllocated}), we deduce that the  beam training overhead $T_B$ is dependent on  Tx/Rx beamwidths assuming a fixed $T_p$. As the beamwidth decreases, the beam-search space expands thus forcing a trade-off between the high antenna gains and the corresponding increase in the beam searching overhead.
\begin{figure}[!]
\centering
\psfrag{x}[cc][cc][0.75]{\small{Beamwidth, $\phi$\textdegree}}
\psfrag{y}[cc][cc][0.75]{\small{Capacity (bits/slot/Hz)}}
\includegraphics[width=0.270\textwidth]{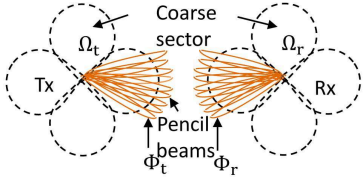}
\vspace{-10mm}
\caption{IEEE 802.11ad beam patterns.}
\label{fig_antpattarens}
\end{figure}
\section{Modeling of link capacity considering beam setup time and misalignment}\label{sec:JointDescription}
Let $P_t$ be the Tx power, $\lambda$ be the  carrier wavelength, $\alpha$ be the path loss exponent, and  $G_t$, $G_r$ be the Tx and Rx antenna gains, respectively. By applying the Friis's free-space pathloss equation, the received power $P_r$ at a  $d$\,m from the Tx is $P_r=G_tG_rG_0(\lambda,d)P_t$.  Where, $G_0(\lambda,d)=\left(\frac{\lambda}{4\pi d}\right)^\alpha$ represents the overall effect of propagation and free-space path losses if Tx and Rx are separated by $d$\,m. We define  $\eta_{\phi_t, \phi_r}$  as the fraction of allocated time slot $T_S$  used for data transmission, i.e.,  $\eta_{\phi_t, \phi_r}=\left( 1-\frac{T_{B}^{\phi_t, \phi_r}}{T_s}\right)$. \normalcolor Then,  the achievable data rate ($r^{\phi_t, \phi_r}$ (bits/slot/Hz)) for a given Tx-Rx beamwidths is given by,
\begin{align} \label{jointEquation1}
r^{\phi_t, \phi_r}(\theta_t, \theta_r)=\eta_{\phi_t, \phi_r}\log_2\left(1+\frac{G^{\phi_t}_t(\theta)G^{\phi_r}_r(\theta)G_0(\lambda,d)P_t}{N_0W}\right).
\end{align}\normalsize
Here, $N_0$ is the  white Gaussian noise's one-sided power spectral density, $W$ represents the signal bandwidth and Tx and Rx antenna gains $G^{\phi_t}_t(\theta)$  and $G^{\phi_r}_r(\theta)$ are given in (\ref{Eq:GaussianGain}). $\eta_{\phi_t, \phi_r}$, as previously defined,  depends on sector beamwidths $\Omega_t$ and $\Omega_r$, pencil beam beamwidths $\phi_t$ and $\phi_r$, and allocated time slot $T_S$. 
%
% % % % % % % % % % % % % % % % % % % % % % % % % % % % % % % % % % % % % % % % % % % % %
The objective is to find the optimum Tx and Rx beamwidths ($\phi_t$ and $\phi_r$) that can maximize the link capacity $r^{\phi_t, \phi_r}(\theta_t, \theta_r)$. Here, for simplicity and without loss of generality we assume $\theta_r=\theta_t=\theta$. Therefore, the resulted  optimization problem can be written as,
\begin{align}\label{opt1} \notag
&\mathbb{P}1: \quad \mathop {\max }\limits_{{\phi _r},{\phi _t}} r({\phi _r},{\phi _t},{\theta}) \\
&\hspace{1.5cm}= \eta_ {{\phi _r},{\phi _t}}{\rm{log}}\left(1 + {c_1({\phi _r},{\phi _t})}{{\rm e}^{ - {{(\frac{{{\theta}}}{{{\phi _r}}})}^2}- {{(\frac{{{\theta}}}{{{\phi _t}}})}^2}  }}\right)\\\nonumber
&\hspace{1.1cm} s.t.\,\, |\theta|\leq \min [2.6\phi_{t}, 2.6\phi_{r}]\\\nonumber
&\hspace{1.5cm}\,\,\,\,  \phi_t\leq \Omega_t, \phi_r\leq \Omega_r
\end{align}\normalcolor
Here, ${\Omega _r},{\Omega _t} \in \left( {0,2\pi} \right]$ and  ${c_1}({\phi _r},{\phi _t})$ (assuming both Tx and Rx stay within their main lobes) is,
\begin{align}\label{eta_c}
c_1({\phi _r},{\phi _t})={{{(\frac{\lambda }{{4\pi d}})}^\alpha}\frac{{{P_t}}}{{{N_0}W}}\left( {\frac{{{{1.6161}^4}}}{{{{\sin }^2}(\frac{{{\phi _r}}}{2}){{\sin }^2}(\frac{{{\phi _t}}}{2})}}} \right)}
\end{align}
 Since, we do not have any prior information on random variable $\theta$, we assume $\theta$ is uniformly distributed in ${\theta}\in \left[ {-\theta_m, \theta_m} \right]$ with pdf $f(\theta)$ given as,

\begin{small}\begin{align}\label{Eq:pdf_theta}
f(\theta)=
\begin{cases}
\begin{array}{ll}
\frac{1}{2\theta_m},&|\theta|\leq \theta_m,  \\
0,&|\theta|> \theta_m.
\end{array}
\end{cases}
\end{align}\end{small}
 The optimization problem $\mathbb{P}1$ is a robust optimization problem in which a certain measure of robustness is sought with respect to the random variable $\theta$.
\newtheorem{lemma}{Lemma}
\begin{lemma}\label{lem:lemma1}
The objective function of the optimization problem $\mathbb{P}1$ is a concave function with respect to the variable $\theta$.
\end{lemma}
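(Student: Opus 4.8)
The plan is to treat $\eta_{\phi_r,\phi_t}$ and $c_1(\phi_r,\phi_t)$ as positive quantities that do not depend on $\theta$, so that concavity of the objective of $\mathbb{P}1$ in $\theta$ is equivalent to concavity of
\begin{equation*}
h(\theta) \;:=\; \log\!\big(1 + c_1\, e^{-a\theta^2}\big), \qquad a \;:=\; \tfrac{1}{\phi_r^2} + \tfrac{1}{\phi_t^2} \;>\; 0 .
\end{equation*}
First I would simply differentiate twice. Writing $p(\theta) := c_1 e^{-a\theta^2}$, one has $p' = -2a\theta\,p$ and $p'' = 2a\,p\,(2a\theta^2 - 1)$, and a short computation collapses the quotient rule into
\begin{equation*}
h''(\theta) \;=\; \frac{p''(1+p) - (p')^2}{(1+p)^2} \;=\; \frac{2a\,p}{(1+p)^2}\,\big(2a\theta^2 - 1 - p(\theta)\big).
\end{equation*}

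Since the prefactor $\dfrac{2a\,p}{(1+p)^2}$ is strictly positive, $h''(\theta)\le 0$ is equivalent to the scalar inequality $2a\theta^2 \le 1 + c_1 e^{-a\theta^2}$. The next step is to reduce this to a single worst case: the left-hand side is increasing in $|\theta|$ while the right-hand side is decreasing in $|\theta|$, so it suffices to check the inequality at the largest admissible misalignment, i.e.\ at $|\theta| = \theta_m$ (equivalently, at the boundary $|\theta| = \min[2.6\phi_t, 2.6\phi_r]$ of the feasible set of $\mathbb{P}1$, within which the Gaussian main-lobe model is valid). There I would close the argument using the operating regime of the problem: either $2a\theta_m^2 \le 1$ holds directly (which amounts to $\theta_m$ not exceeding roughly half a beamwidth), or, in the practically relevant high-SNR regime where $c_1$ — which carries the factor $\frac{P_t}{N_0 W}$ — is large, the term $c_1 e^{-a\theta_m^2}$ by itself dominates $2a\theta_m^2 - 1$, because $\theta_m$ stays inside the main lobe and hence $a\theta_m^2 = O(1)$.

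I expect the delicate point to be exactly this last step. The function $h$ is \emph{not} globally concave in $\theta$: for $|\theta|$ sufficiently large the bracket $2a\theta^2 - 1 - p(\theta)$ turns positive and $h'' > 0$, so the statement is only meaningful over the feasible (main-lobe) range of $\theta$, and one genuinely has to invoke the constraint $|\theta|\le\theta_m$ together with the magnitude of $c_1$ to control the sign. Everything else — the two differentiations, the factorisation of $h''$, and the monotonicity reduction to the endpoint — is routine, and the positivity of $\eta_{\phi_r,\phi_t}$ merely scales $h''$ without changing its sign (the case $\eta_{\phi_r,\phi_t}=0$ being trivial).
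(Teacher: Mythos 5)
Your computation is correct and your route is genuinely different from the paper's --- and more honest about what is actually true. The paper's proof is a one-line composition argument: the objective ``is of the form $\log(1+\mathrm{e}^{-x^2})$'', $-x^2$ is concave, and ``$\mathrm{e}(\cdot)$ and $\log(\cdot)$ preserve concavity.'' That argument does not go through: the composition rule requires the outer function to be concave \emph{and} nondecreasing, whereas $\mathrm{e}^{(\cdot)}$ is convex, so $\mathrm{e}^{-a\theta^2}$ is not concave in $\theta$ (it is a Gaussian bell, convex outside its inflection points), and indeed $\log(1+c_1\mathrm{e}^{-a\theta^2})$ is non-constant and bounded on $\mathbb{R}$, so it cannot be globally concave. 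Your factorization
\begin{equation*}
h''(\theta)=\frac{2a\,p}{(1+p)^2}\bigl(2a\theta^2-1-p(\theta)\bigr),\qquad p(\theta)=c_1\mathrm{e}^{-a\theta^2},
\end{equation*}
is exactly right and makes the failure explicit: the sign of $h''$ flips once $2a\theta^2$ exceeds $1+p(\theta)$.

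So the ``delicate point'' you flag is a genuine defect of the lemma and of the paper's proof, not of your argument. Concavity in $\theta$ holds only on a neighbourhood of the origin whose size depends on $c_1$ and on $a=\phi_r^{-2}+\phi_t^{-2}$; any correct proof must, as you do, restrict attention to $|\theta|\le\theta_m$ (equivalently, to the main lobe) and then verify $2a\theta_m^2\le 1+c_1\mathrm{e}^{-a\theta_m^2}$ at the single worst-case point, to which your monotonicity observation legitimately reduces the problem. That verification is parameter-dependent: it holds comfortably for the paper's simulation values (where $c_1$ is of order $10^{7}$--$10^{9}$ for degree-scale beamwidths), but it can fail at low SNR or for $\theta_m$ near the edge of the main lobe, so an unconditional statement of the lemma is false. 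The only thing left open in your write-up is that the endpoint check is presented as an either/or rather than closed under an explicit hypothesis; stating the sufficient condition $2a\theta_m^2\le 1+c_1\mathrm{e}^{-a\theta_m^2}$ as an assumption of the lemma would make your proof complete, and is also the minimal repair the paper's own argument needs before Jensen's inequality is invoked.
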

\begin{proof}
The objective function of the problem $\mathbb{P}1$ is in the form of ${\rm log}(1+{\rm e}^{-x^2})$. Since, $-x^2$ is concave function and ${\rm e}(.)$ and ${\rm log}(.)$ are functions that preserve concavity \cite{boyd2009convex}, the overall function is still a concave function.
\end{proof}
Given \emph{Lemma} \ref{lem:lemma1}, we invoke  Jensen's inequality. Therefore,

\begin{align} \label{} \notag
&E_\theta\left[ {\eta_ {{\phi _r},{\phi _t}}{\rm log}\big(1 + {c_1}({\phi _r},{\phi _t}){{\rm e}^{ - (\frac{{{\theta ^2}}}{{{\phi _r}^2}})- (\frac{{{\theta ^2}}}{{{\phi _t}^2}})  }}\big)} \right]\\ \notag
&\hspace{1cm}\le \eta_ {{\phi _r},{\phi _t}}{\rm log}\big(1 + {c_1}({\phi _r},{\phi _t}){{\rm e}^{ - (\frac{{E[{\theta ^2}]}}{{{\phi _r}^2}})- (\frac{{E[{\theta ^2}]}}{{{\phi _t}^2}}) }}\big)\\
&\hspace{1cm}{\mathop  = \limits^{\left( a \right)}} \eta_ {{\phi _r},{\phi _t}}{\rm log}\big(1 + {c_1}({\phi _r},{\phi _t}){{\rm e}^{ - (\frac{{\frac{\theta_m^2}{3}}}{{{\phi _r}^2}})- (\frac{{\frac{\theta_m^2}{3}}}{{{\phi _t}^2}})  }}\big)
\end{align}
where $(a)$ follows from the assumption of $\theta$ as a zero-mean uniformly distributed random variable. Therefore, the optimization problem can be rewritten as
\begin{small}\begin{align}\label{opt2} \notag
&\mathbb{P}2: \quad \mathop {\max }\limits_{{\phi _r},{\phi _t}} r^{{\phi _r},{\phi _t}} \\
&\hspace{1.5cm}= \eta_ {{\phi _r},{\phi _t}}{\rm{log}}\bigg(1 + {{{{(\frac{\lambda }{{4\pi d}})}^\alpha}\frac{{{P_t}}}{{{N_0}W}}
	\bigg( {\frac{{{{1.6161}^4}}}{{{{\sin }^2}(\frac{{{\phi _r}}}{2}){{\sin }^2}(\frac{{{\phi _t}}}{2})}}} \bigg)}}\nonumber \\
&\hspace{1.5cm}\times{{\rm e}^{ - {{(\frac{{{\theta_m}}}{{{3 \phi _r}}})}^2}- {{(\frac{{{\theta_m}}}{{{3 \phi _r}}})}^2}}}\bigg)
\end{align}\end{small}
\begin{lemma}\label{lem:lemma2}
The objective function of the optimization problem $\mathbb{P}2$ is a strictly quasi-concave function of variables $\phi _r$ and $\phi _t$.
\end{lemma}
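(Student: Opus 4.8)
The plan is to prove the stronger statement that the objective $r^{\phi_r,\phi_t}$ of $\mathbb{P}2$ is \emph{strictly log-concave} on its (convex) feasible set, since strict log-concavity implies strict quasi-concavity \cite{boyd2009convex}. The feasible set is indeed convex: it is the intersection of the box $\{\phi_t\le\Omega_t,\ \phi_r\le\Omega_r\}$, the half-spaces $\{\phi_r\ge\theta_m/2.6\}$ and $\{\phi_t\ge\theta_m/2.6\}$ forced by the misalignment constraint, and the set $\{\eta_{\phi_r,\phi_t}>0\}$, the last being convex because $\eta_{\phi_r,\phi_t}>0$ amounts to $\frac{T_p\Omega_t}{T_s\phi_t}+\frac{T_p\Omega_r}{T_s\phi_r}<1-\frac{2\pi T_p}{T_s}\big(\frac1{\Omega_t}+\frac1{\Omega_r}\big)$, a convex inequality. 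Writing $r^{\phi_r,\phi_t}=\eta_{\phi_r,\phi_t}\,g(\phi_r,\phi_t)$ with $g(\phi_r,\phi_t)=\log\!\big(1+h(\phi_r,\phi_t)\big)$ and $h(\phi_r,\phi_t)=c_1(\phi_r,\phi_t)\,e^{-\theta_m^2/(9\phi_r^2)-\theta_m^2/(9\phi_t^2)}$, we get $\log r^{\phi_r,\phi_t}=\log\eta_{\phi_r,\phi_t}+\log g(\phi_r,\phi_t)$, so it suffices to show each summand is concave with one of them strictly so. The first is immediate: on the feasible set $\eta_{\phi_r,\phi_t}$ is a positive constant plus positive multiples of the concave functions $-1/\phi_r$ and $-1/\phi_t$, hence positive and concave, so $\log\eta_{\phi_r,\phi_t}$ is concave.

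For the second summand I would argue in two steps. \emph{Step~1 (the SNR term is strictly log-concave).} The key point is that $\log h$ is additively separable, $\log h(\phi_r,\phi_t)=\kappa+q(\phi_r)+q(\phi_t)$ for a constant $\kappa$ with $q(\phi)=-2\log\sin(\phi/2)-\theta_m^2/(9\phi^2)$, so $\nabla^2\log h=\mathrm{diag}\big(q''(\phi_r),q''(\phi_t)\big)$ and $\log h$ is strictly concave exactly when $q''<0$. A short differentiation yields $q''(\phi)=\frac12\csc^2(\phi/2)-\frac{2\theta_m^2}{3\phi^4}$; using $\phi\ge\theta_m/2.6$ and the small beam-level beamwidths of interest, for which $\csc^2(\phi/2)\approx 4/\phi^2$, this reduces to $q''(\phi)\approx\frac{2}{\phi^4}\big(\phi^2-\theta_m^2/3\big)<0$ whenever $\phi<\theta_m/\sqrt3$, which holds throughout the narrow-beam regime of interest (see the obstacle below). \emph{Step~2 (the outer map preserves strict concavity).} The scalar map $m(u):=\log\log(1+e^u)$ is strictly increasing and strictly concave on $\mathbb{R}$: $m'(u)=\sigma(u)/s(u)>0$ with $\sigma(u)=e^u/(1+e^u)\in(0,1)$, $s(u)=\log(1+e^u)>0$, and the sign of $m''(u)$ is that of $(1-\sigma)s-\sigma$, which is negative because $(1-\sigma)s<\sigma$ is equivalent to $s<\sigma/(1-\sigma)=e^u$, i.e., $\log(1+e^u)<e^u$, a standard inequality. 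Since $\log g=m(\log h)$ is the composition of the strictly increasing strictly concave $m$ with the strictly concave $\log h$, it is strictly concave. Adding the two summands, $\log r^{\phi_r,\phi_t}$ is strictly concave, hence $r^{\phi_r,\phi_t}$ is strictly quasi-concave.

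The step I expect to be the real obstacle is controlling the sign of $q''$ over the \emph{entire} feasible region. Because $\sin^2(\phi/2)<\phi^2/4$, one checks that $\phi<\theta_m/\sqrt3$ is in fact \emph{necessary} for $q''<0$, so as stated the lemma is effectively being claimed in the practically relevant narrow-beam window $\theta_m/2.6\le\phi_r,\phi_t<\theta_m/\sqrt3$; outside it one would instead have to invoke the second-order characterization of strict quasi-concavity and verify $y^\top\nabla^2 r^{\phi_r,\phi_t}\,y<0$ for every nonzero $y$ with $\nabla r^{\phi_r,\phi_t}\!\cdot y=0$, a considerably heavier computation. An equivalent, slightly more self-contained alternative that avoids the Hessian of $r^{\phi_r,\phi_t}$ is to show convexity of each superlevel set $\{\,r^{\phi_r,\phi_t}\ge\gamma\,\}$ directly: for $x_\lambda=\lambda x_1+(1-\lambda)x_2$, combine $\eta_{\phi_r,\phi_t}(x_\lambda)\ge\eta_{\phi_r,\phi_t}(x_1)^{\lambda}\eta_{\phi_r,\phi_t}(x_2)^{1-\lambda}$ with the same inequality for $g$ (exactly the log-concavity of $g$ from Step~2) to obtain $r^{\phi_r,\phi_t}(x_\lambda)\ge r^{\phi_r,\phi_t}(x_1)^{\lambda}r^{\phi_r,\phi_t}(x_2)^{1-\lambda}\ge\gamma$; this is the same argument in disguise and meets the same obstacle.
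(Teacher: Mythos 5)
There is a genuine gap, and you have essentially identified it yourself without repairing it. Your route is to prove strict log-concavity of the objective, which hinges entirely on Step~1: strict concavity of $\log h$, i.e., $q''(\phi)=\tfrac12\csc^2(\phi/2)-\tfrac{2\theta_m^2}{3\phi^4}<0$. But the term $-2\log\sin(\phi/2)$ is \emph{convex} on $(0,\pi)$, so the only source of concavity in $q$ is the misalignment penalty $-\theta_m^2/(9\phi^2)$, and (as you note, since $\csc^2(\phi/2)>4/\phi^2$) the condition $\phi<\theta_m/\sqrt3$ is \emph{necessary} for $q''\le 0$. The feasible set of $\mathbb{P}2$ runs up to $\phi_t\le\Omega_t$, $\phi_r\le\Omega_r$ (the paper's numerics use $\Omega=90^{\circ}$ with $\theta_m$ at most $10^{\circ}$--$15^{\circ}$), so log-concavity of the SNR factor is false on almost all of the domain, and fails identically when $\theta_m=0$. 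Consequently the argument proves the lemma only on the thin sliver $\theta_m/2.6\le\phi<\theta_m/\sqrt3$; outside it the conclusion does not follow from anything you have established. Flagging the obstacle and sketching two heavier alternatives (the bordered-Hessian check, or direct convexity of superlevel sets) does not close the gap, because neither is carried out, and the superlevel-set version ``meets the same obstacle'' only because you again reduce it to log-concavity of $g$. Quasi-concavity is strictly weaker than log-concavity, so the failure of your sufficient condition does not refute the lemma --- but it does mean your proof does not establish it.

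For comparison, the paper avoids second-order conditions altogether: exploiting symmetry it works with the single-variable profile $\tfrac{{\rm e}^{-x^{-2}}}{\sin^2(x/2)}$ and argues directly from the superlevel-set definition that this ratio is strictly quasi-concave (an increasing-over-increasing monotonicity argument on $(0,\pi)$ needing no Hessian), then composes with the increasing map $x\mapsto\log(1+x)$, and finally multiplies by the nonnegative quasi-concave factor $\eta_{\phi_r,\phi_t}$ invoking a product rule for nonnegative quasi-concave functions \cite{kopa2012characterization}. The structural lesson is that quasi-concavity, unlike (log-)concavity, is preserved under increasing transformations and under the cited product rule, so one never needs the SNR term itself to be concave in any sense; your decomposition demands more of the problem than it can deliver. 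If you want to salvage your write-up, replace Step~1 by a monotonicity/superlevel-set argument for $\phi\mapsto\tfrac{{\rm e}^{-\theta_m^2/(9\phi^2)}}{\sin^2(\phi/2)}$ on the actual feasible range and drop the log-concavity framing; your Step~2 observation that an increasing outer map can be applied would then survive in the weaker (and sufficient) quasi-concave form.
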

\begin{proof}
In order to prove \emph{Lemma} \ref{lem:lemma2}, considering the fact that the  objective function is a symmetric function in $\phi_r$ and $\phi_t$, we first prove that a function of the form $\frac{{{{\rm{e}}^{ - {x^{ - 2}}}}}}{{{{\sin }^2}\left( {\frac{x}{2}} \right)}}$ is a strictly quasi-concave function.

Considering the theorem\footnote{$f$ is strictly quasi-concave \emph{iff} $\forall a {\in} \Re ,\,\forall \lambda{\in} \left[ {0,1} \right], \text{and} \,\forall x,y$\\
\begin{align} \label{theo:strict_conacve}
f(x) > a,\,\,f(y) > a \Rightarrow \,\,f(\lambda x + (1 - \lambda )y) > a.
\end{align}} from \cite{boyd2009convex}, if 
$\frac{{{{\rm{e}}^{ - {x^{ - 2}}}}}}{{{{\sin }^2}\left( {\frac{x}{2}} \right)}}> a$ and $\frac{{{{\rm{e}}^{ - {y^{ - 2}}}}}}{{{{\sin }^2}\left( {\frac{y}{2}} \right)}}> a$, then,
\begin{align} \label{} \notag
&f(\lambda x + (1 - \lambda )y) \\ \notag
&\hspace{1cm} = \frac{{{{\rm{e}}^{ - {{\left( {\lambda x + (1 - \lambda )y} \right)}^{ - 2}}}}}}{{{{\sin }^2}\left( {\frac{{\left( {\lambda x + (1 - \lambda )y} \right)}}{2}} \right)}} > \left\{ \begin{array}{l}
\frac{{{{\rm{e}}^{ - {y^{ - 2}}}}}}{{{{\sin }^2}\left( {\frac{x}{2}} \right)}}\,\,\,\,\,\,\,\,{\rm{if}}\,\,\,x \ge y\\ \notag
\frac{{{{\rm{e}}^{ - {x^{ - 2}}}}}}{{{{\sin }^2}\left( {\frac{y}{2}} \right)}}\,\,\,\,\,\,\,\,{\rm{if}}\,\,\,x < y
\end{array} \right.\\
&\hspace{1cm} > \left\{ \begin{array}{l}
\frac{{a{{\sin }^2}\left( {\frac{y}{2}} \right)}}{{{{\sin }^2}\left( {\frac{x}{2}} \right)}}\,\,\,\,\,\,\,\,{\rm{if}}\,\,\,x \ge y\\
\frac{{a{{\sin }^2}\left( {\frac{x}{2}} \right)}}{{{{\sin }^2}\left( {\frac{y}{2}} \right)}}\,\,\,\,\,\,\,\,{\rm{if}}\,\,\,x < y
\end{array} >a. \right.
\end{align}
Therefore, functions of the form $\frac{{{{\rm{e}}^{ - {x^{ - 2}}}}}}{{{{\sin }^2}\left( {\frac{x}{2}} \right)}}$, are quasi-concave.
In addition, if $f (.)$ is a strictly quasi-concave function and $g$ is an increasing function, the composite function $g(f(.))$ is a strictly quasi-concave function \cite{boyd2009convex}. Therefore, given that ${\rm log}(1+x)$ is an increasing function of $x$, ${\rm log}(1+{\frac{{{{\rm{e}}^{ - {x^{ - 2}}}}}}{{{{\sin }^2}\left( {\frac{x}{2}} \right)}}})$ is strictly quasi-concave. 
Moreover, ${1 - (\frac{{2\pi }}{{{\Omega _t}}} + \frac{{2\pi }}{{{\Omega _r}}})\frac{{{T_p}}}{{{T_s}}} - (\frac{{{\Omega _t}}}{{{x}}})\frac{{{T_p}}}{{{T_s}}}}$ is a nonnegative strictly quasi-concave function, given the typical values of its parameters $T_p$, $T_s$, $\Omega_t$ and $\Omega_r$ (Proof comes in the Appendix). Given the fact that the product of nonnegative strictly quasi-concave functions is quasi-concave~\cite{kopa2012characterization}, the overall function which is in the form of $\left({1 - (\frac{{2\pi }}{{{\Omega _t}}} + \frac{{2\pi }}{{{\Omega _r}}})\frac{{{T_p}}}{{{T_s}}} - (\frac{{{\Omega _t}}}{{{x}}})\frac{{{T_p}}}{{{T_s}}}}\right){\rm log}(1+{\frac{{{{\rm{e}}^{ - {x^{ - 2}}}}}}{{{{\sin }^2}\left( {\frac{x}{2}} \right)}}})$ is also a strictly quasi-concave function.
%Moreover, the multiplication with a strictly increasing function preserves the quasi-concavity . 
\end{proof}

Since, the objective function in $\mathbb{P}2$ is strictly quasi-concave, the maximizer is unique \cite{boyd2009convex}. Given that the objective function in \eqref{opt2} is a differentiable function with respect to $\phi _r$ and $\phi _t$, the maximum value is achieved by the well-known necessary and sufficient condition $\nabla r({\phi _r},{\phi _t}) = 0$. The derivative of the objective function with respect to $\phi _r$ and $\phi _t$ are given in \eqref{eq:der_phiR} and \eqref{eq:der_phiT}. Therefore, we have a set of two equations of two variables. However, finding the solution of the equation $\nabla r({\phi _r},{\phi _t}) = 0$ is analytically complex. Therefore, in such cases the problem must
be solved using  iterative algorithms~\cite{boyd2009convex}.  

\begin{align} \label{eq:der_phiR}\notag
\frac{\partial }{{\partial {\phi _r}}}&=\frac{\bigg( \eta_{\phi_t, \phi_r}c_1({\phi _t},{\phi _r}){\rm e}^{-\frac{\theta^2_m}{3\phi^2_r}}\left( \frac{2\theta^2_m}{3\phi^3_r}-\cot{(\frac{{{\phi _r}}}{2}}\right)\bigg)}{\left(1+c_1({\phi _t},{\phi _r}){\rm e}^{-\frac{\theta^2_m}{3\phi^2_r}}\right)\log2}\\
&+\frac{\Omega_r\,T_p}{T_s\,\phi_r^2\log2}=0
\end{align}
\begin{align} \label{eq:der_phiT}\notag
\frac{\partial }{{\partial {\phi _t}}}&=\frac{\bigg( \eta_{\phi_t, \phi_r}c_1({\phi _t},{\phi _r}){\rm e}^{-\frac{\theta^2_m}{3\phi^2_t}}\left( \frac{2\theta^2_m}{3\phi^3_t}-\cot{(\frac{{{\phi _t}}}{2}}\right)\bigg)}{\left(1+c_1({\phi _t},{\phi _r}){\rm e}^{-\frac{\theta^2_m}{3\phi^2_t}}\right)\log2}\\
&+\frac{\Omega_t\,T_p}{T_s\,\phi_t^2\log2}=0
\end{align}

Fig.~\ref{fig_KSU_params_corrected_45degree} shows the variations in capacity $r^{{\phi _r},{\phi _t}}$ as function of $\phi_t$ and $\phi_r$ for $\theta_m$=10\degree. We can observe that there exist optimum $\phi_t$ and $\phi_r$ that maximize the link capacity.
 \begin{figure}[]
 \centering
% \captionsetup{justification=centering}
 \includegraphics[width=0.35\textwidth]{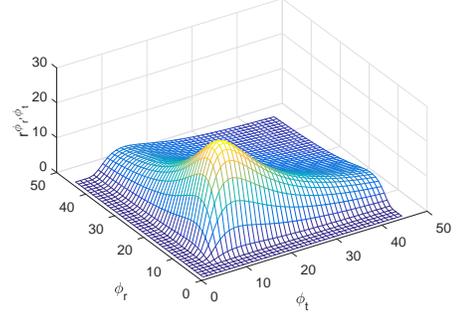}
  \caption{Link capacity for varying $\phi_t$, and $\phi_r$, $\theta_m=10\degree$.}
 \label{fig_KSU_params_corrected_45degree}
  \end{figure}
%%%%%
\section{Calculating Average Link Capacity}
Generally in most of the scenarios AP/BS are fixed and mainly users' devices are more likely to experience alignment errors. Therefore, for analytical tractability we assume that only user devices are affected by alignment errors. Hence,  $\theta_r=\theta$ and $\theta_t=0$. By substituting the expressions for antenna gain  in (\ref{jointEquation1}), 
\begin{small}
\begin{align} \label{jointEquation2}
r^{\phi_t, \phi_r}(\theta)=\eta_{\phi_t, \phi_r}\log_2\left(1+c_1\exp{\left(-\left(\frac{\theta}{\phi_r}\right)^2\right)}\right).
\end{align}
\end{small}
For brevity,  we  denote   $\eta_{\phi_t, \phi_r}$ and $r^{\phi_t, \phi_r}(\theta)$  by   $\eta$ and $r$, respectively.

Since  $\theta$ is assumed to be a uniformly distributed random variable interval $-\theta_m$ to $\theta_m$. Hence $r$,  a function of  $\theta$ ($r=g(\theta)$), also become a random variable. We represent the resulting random variable by $R$.  Applying the rule of transformation of random variables~\cite{grimmett2001probability}, the pdf of $R$ represented as $f_{R}(r)$, , can be written as,
\begin{equation}
f_{R}(r)=(-1)^i f(g^{-1}(r))\frac{\partial }{\partial r}g^{-1}(r).
\end{equation}
Where, $i = 0 $ if $g(\theta)$ is a monotonically increasing function of $\theta$, and $i=1$ if it is a monotonically decreasing function of $\theta$. It is evident from (\ref{jointEquation2}) that $r=g(\theta)$ is a monotonically decreasing function of $\theta$, which maximizes at $\theta=0$, and minimizes at $|\theta|=\theta_m$. Consequently, $i=1$ and $r$ is bounded between $g(\theta_m)$ and  $g(0)$. Using  (\ref{jointEquation2}), 
\begin{small}
\begin{align} \label{jointEquation2_inverse}
g^{-1}(r)={\phi_r}\left(\ln{\frac{2^{\frac{r}{\eta}}-1}{c_1}}\right)^{-\frac{1}{2}}.
\end{align}
\end{small}

From (\ref{Eq:pdf_theta}), it can be inferred that $f(g^{-1}(r))=\frac{1}{2\theta_{m}}$. Hence, the pdf of $R$ turns out as \begin{small}$f_{R}(r)=-\frac{\partial }{\partial r}g^{-1}(r)$\end{small}. Finally, the expression for average link capacity can be written as,
\small\begin{align}\label{expectedcapacity}
\mathbb{E}[R]=\int_{g(\theta_m)}^{g(0)}rf_{R}(r)dr.
\end{align}\normalsize
Since $f_{R}(r)=-\frac{\partial }{\partial r}g^{-1}(r)$, (\ref{expectedcapacity}) can be simplified  as follows,
\small\begin{align}\label{finalExpectedCapacity}
\mathbb{E}[R]=-\frac{1}{2\theta_{m}}\left(rg^{-1}(r)-\int{g^{-1}(r)}dr\right)\Bigg|_{g(\theta_m)}^{g(0)}.
\end{align}\normalsize
Since we are interested in examining the impact of alignment error on the main lobe gain, we assume that antenna pointing directions are within the -20\,dBm gain directions. Therefore, we can safely assume that ${2^{\frac{r}{\eta}}>> 1}$. Hence, the resulting approximate expression for 
 $g^{-1}(r)$ from (\ref{jointEquation2_inverse}) is,
\begin{align} \label{jointEquation2_inverse_approx}
\widetilde{g}^{-1}(r)={\phi_r}\left(\ln{\frac{2^{\frac{r}{\eta}}}{c_1}}\right)^{-\frac{1}{2}}.
\end{align}
The above approximation is required to find an integrable expression for $g^{-1}(r)$ as the exact expression in (\ref{jointEquation2_inverse}) is not integrable. 
Using (\ref{jointEquation2_inverse_approx}), (\ref{finalExpectedCapacity}) can be simplified as,
\small\begin{align}\label{finalExpectedCapacity2}
\mathbb{E}[R]=\frac{1}{2\theta_{m}}\left(\frac{2\phi_r\eta}{\ln{2}}\left(\ln{\frac{2^{\frac{r}{\eta}}}{c_1}}\right)^{\frac{1}{2}}-r\phi_r \left(\ln{\frac{2^{\frac{r}{\eta}}-1}{c_1}}\right)^{-\frac{1}{2}} \right)\Bigg|_{g(\theta_m)}^{g(0)}.
\end{align}\normalsize

 Depending on the magnitude of maximum alignment error $|\theta_{m}|$ with respect to the  main lobe beamwidth of user device, two scenarios are possible. We represent these scenarios  using the following two propositions:
\noindent \begin{proposition}\label{prop1}
If $|\theta_{m}|\le\frac{\phi_{ML}}{2}$, the expected link capacity is same as given by (\ref{finalExpectedCapacity2}).
\end{proposition}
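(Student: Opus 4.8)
The plan is to argue that the hypothesis $|\theta_{m}|\le\phi_{ML}/2$ is exactly the condition under which every step of the derivation leading to \eqref{finalExpectedCapacity2} is legitimate, so that nothing in that derivation needs to be altered. In other words, Proposition~\ref{prop1} is a statement about the range of validity of the gain law used in Section~\ref{sec:JointDescription}, and the proof amounts to making that implicit assumption explicit.

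First I would translate the hypothesis into a statement about the antenna gain. By definition $\phi_{ML}=2.6\phi_r$, so $\phi_{ML}/2=1.3\phi_r$, and the main-lobe branch of \eqref{Eq:GaussianGain}, namely $G_m^{\phi_r}(\theta)=\left(\frac{1.6162}{\sin(\phi_r/2)}\right)^{2} e^{-K_1 4\log_e(2)(\theta/\phi_r)^{2}}$, is precisely the branch that applies when $|\theta|\le 1.3\phi_r$. In this section $\theta_t=0$ and $\theta=\theta_r$ is uniform on $[-\theta_m,\theta_m]$ by \eqref{Eq:pdf_theta}, so the hypothesis $|\theta_m|\le\phi_{ML}/2=1.3\phi_r$ forces $|\theta|\le 1.3\phi_r$ for \emph{every} realization of $\theta$; hence the receive gain equals $G_m^{\phi_r}(\theta)$ with probability one, and the transmit antenna is trivially within its own main lobe.

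Given this, I would simply verify that each step already assumed exactly this single-branch regime: the rate-versus-angle law \eqref{jointEquation2} uses only the Gaussian main-lobe gain; $g(\theta)$ is monotone and extremal at $\theta=0$ and $|\theta|=\theta_m$ as noted right after \eqref{jointEquation2_inverse}, so the change-of-variables formula yields $f_R(r)=-\frac{\partial}{\partial r}g^{-1}(r)$ on $[g(\theta_m),g(0)]$ with $g^{-1}$ as in \eqref{jointEquation2_inverse}; and the ``$2^{r/\eta}\gg 1$'' approximation that produces \eqref{jointEquation2_inverse_approx} is nothing other than the statement that the pointing direction stays within the $-20$\,dB (i.e.\ main-lobe) region, which the hypothesis guarantees. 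Consequently $\mathbb{E}[R]=\int_{g(\theta_m)}^{g(0)} r\,f_R(r)\,dr$ reduces to \eqref{finalExpectedCapacity}, hence to \eqref{finalExpectedCapacity2}, with no additional terms, which is the claim.

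There is no real obstacle here; the only point worth stating with care is that the hypothesis covers the \emph{whole} support of $\theta$ rather than merely a high-probability set, which is exactly what prevents the sidelobe branch of \eqref{Eq:GaussianGain} from ever being activated, and that is the single distinction between this case and the complementary one $|\theta_m|>\phi_{ML}/2$ handled in the next proposition.
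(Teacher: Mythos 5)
Your proposal is correct and takes essentially the same route as the paper: the paper's own proof is the one-line observation that $|\theta_m|\le\phi_{ML}/2$ keeps the receiver inside its main lobe for every realization of $\theta$, so the main-lobe-only derivation culminating in (\ref{finalExpectedCapacity2}) applies unchanged. You simply make explicit the branch condition $|\theta|\le 1.3\phi_r$ of (\ref{Eq:GaussianGain}) and verify each intermediate step, which is a more careful write-up of the identical argument.
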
 
\noindent 

\begin{proof*}
\textnormal{If the maximum alignment error $|\theta_{m}|$, is less than half of  main lobe beamwidth $2.6\phi$,   Rx is bound to stay within its main lobe beamwidths, despite of the alignment error. Therefore, the average capacity  obtained using (\ref{finalExpectedCapacity}) is same as represented in (\ref{finalExpectedCapacity2}).}
\end{proof*}
\noindent \begin{proposition}
If $|\theta_{m}|>\frac{\phi_{ML}}{2}$, the expected link capacity is,
\begin{small}
\begin{align} 
\mathbb{E}[R]=&\frac{p_{\mbox{m,m}}}{2\theta_{m}}\left(\frac{2\phi_r\eta}{\ln{2}}\left(\ln{\frac{2^{\frac{r}{\eta}}}{c_1}}\right)^{\frac{1}{2}}-r\phi_r \left(\ln{\frac{2^{\frac{r}{\eta}}-1}{c_1}}\right)^{-\frac{1}{2}} \right)\Bigg|_{g(\theta_m)}^{g(0)}\nonumber\\
&+p_{m,s}\eta\log_2\left(1+\frac{G^{\phi_t}_mG^{\phi_r}_sG_0(\lambda,d)P_t}{N_0W}\right),
\label{average_misalignement_more}
\end{align}
\end{small}
\end{proposition}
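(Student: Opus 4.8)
The plan is to condition the expectation on whether the receiver's boresight remains inside its main lobe. Since the main-lobe region in (\ref{Eq:GaussianGain}) is $|\theta|\le 1.3\phi_r=\phi_{ML}/2$ (recall $\phi_{ML}=2.6\phi_r$), I would introduce the events $\mathcal{A}=\{|\theta|\le\phi_{ML}/2\}$ and $\mathcal{A}^{c}=\{\phi_{ML}/2<|\theta|\le\theta_m\}$; under the hypothesis $|\theta_m|>\phi_{ML}/2$ both carry positive probability. Because $\theta$ is uniform on $[-\theta_m,\theta_m]$ by (\ref{Eq:pdf_theta}) and the two sets are nested intervals, $\Pr(\mathcal{A})=\phi_{ML}/(2\theta_m)=:p_{m,m}$ (here $<1$) and $\Pr(\mathcal{A}^{c})=1-p_{m,m}=:p_{m,s}$. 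I would then apply the law of total expectation, $\mathbb{E}[R]=p_{m,m}\,\mathbb{E}[R\mid\mathcal{A}]+p_{m,s}\,\mathbb{E}[R\mid\mathcal{A}^{c}]$, and evaluate the two conditional expectations separately.

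For the sidelobe contribution: on $\mathcal{A}^{c}$ the antenna model (\ref{Eq:GaussianGain}) gives $G_r^{\phi_r}(\theta)=G_s^{\phi_r}$, a constant independent of $\theta$, and since the transmitter is taken to be perfectly aligned ($\theta_t=0$, so $G_t^{\phi_t}=G_m^{\phi_t}$), the instantaneous rate in (\ref{jointEquation1}) is the deterministic quantity $\eta\log_2\!\big(1+G_m^{\phi_t}G_s^{\phi_r}G_0(\lambda,d)P_t/(N_0W)\big)$. Hence $\mathbb{E}[R\mid\mathcal{A}^{c}]$ equals this value, and $p_{m,s}\,\mathbb{E}[R\mid\mathcal{A}^{c}]$ is exactly the second term of (\ref{average_misalignement_more}).

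For the main-lobe contribution: conditioned on $\mathcal{A}$, $\theta$ is uniform on $[-\phi_{ML}/2,\phi_{ML}/2]$, the receiver gain is again the Gaussian main-lobe expression, and the rate is $g(\theta)$ as in (\ref{jointEquation2}), which remains monotone decreasing in $|\theta|$. This is precisely the situation of Proposition~\ref{prop1}/(\ref{finalExpectedCapacity2}) with $\theta_m$ replaced by $\phi_{ML}/2$, so I would reuse verbatim the transformation-of-random-variables argument of (\ref{jointEquation2_inverse})–(\ref{finalExpectedCapacity2}): the conditional density of $R$ is the one obtained there with $\theta_m\to\phi_{ML}/2$, the same integration by parts applies, and the same large-argument approximation (\ref{jointEquation2_inverse_approx}) yields $\mathbb{E}[R\mid\mathcal{A}]=\frac{1}{\phi_{ML}}\big(\frac{2\phi_r\eta}{\ln2}(\ln\frac{2^{r/\eta}}{c_1})^{1/2}-r\phi_r(\ln\frac{2^{r/\eta}-1}{c_1})^{-1/2}\big)\big|_{g(\phi_{ML}/2)}^{g(0)}$. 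Multiplying by $p_{m,m}=\phi_{ML}/(2\theta_m)$ collapses the prefactor to $1/(2\theta_m)$ and produces the first term of (\ref{average_misalignement_more}).

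The argument is essentially careful bookkeeping rather than anything deep, but I would flag two places to get right. First, the normalization: conditioning on $\mathcal{A}$ rescales the density to $1/\phi_{ML}$, so the weight $p_{m,m}$ must be combined with this conditional density (not with the unconditional $1/(2\theta_m)$), and the lower integration limit of the main-lobe term is the main-lobe edge $\theta=\phi_{ML}/2$ — where $g$ attains its minimum over $\mathcal{A}$ — not $\theta_m$. Second, the jump of $G_r^{\phi_r}(\cdot)$ at $|\theta|=1.3\phi_r$ is harmless, being a single point of the continuous part of the distribution and carrying no probability mass. As a consistency check I would let $\theta_m\downarrow\phi_{ML}/2$: then $p_{m,m}\to1$, $p_{m,s}\to0$, $g(\phi_{ML}/2)\to g(\theta_m)$, and (\ref{average_misalignement_more}) reduces to (\ref{finalExpectedCapacity2}), matching Proposition~\ref{prop1}.
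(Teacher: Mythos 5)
Your decomposition is exactly the one the paper uses: condition on whether the receiver boresight stays inside the main lobe, compute $p_{m,m}=1.3\phi/\theta_m$ and $p_{m,s}=1-p_{m,m}$ from the uniform law (\ref{Eq:pdf_theta}), apply total expectation, and note that on the side-lobe event the rate is the deterministic constant $\eta\log_2\bigl(1+G_m^{\phi_t}G_s^{\phi_r}G_0(\lambda,d)P_t/(N_0W)\bigr)$. The side-lobe term and the probabilities match the paper's proof, and your limiting check $\theta_m\downarrow\phi_{ML}/2$ is a nice addition the paper does not perform.

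The problem is the main-lobe term, and it is precisely the two ``places to get right'' that you flag. Your (correct) bookkeeping --- conditional density $1/\phi_{ML}$ on $[-\phi_{ML}/2,\phi_{ML}/2]$, lower integration limit $g(\phi_{ML}/2)$, then multiplication by $p_{m,m}=\phi_{ML}/(2\theta_m)$ --- yields
$\frac{1}{2\theta_{m}}\left(\cdot\right)\big|_{g(\phi_{ML}/2)}^{g(0)}$,
whereas the first term printed in (\ref{average_misalignement_more}) is
$\frac{p_{m,m}}{2\theta_{m}}\left(\cdot\right)\big|_{g(\theta_m)}^{g(0)}$:
it carries an \emph{additional} factor of $p_{m,m}$ on top of the unconditional $1/(2\theta_m)$ normalization, and its lower limit is $g(\theta_m)$, not the main-lobe edge. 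The paper's own proof arrives at the printed formula by simply declaring $\mathbb{E}[R\,\big|\,|\theta|<1.3\phi]$ to be (\ref{finalExpectedCapacity2}) verbatim (i.e., keeping the unconditional density and the limits $g(\theta_m)$ to $g(0)$) and then weighting it by $p_{m,m}$ --- exactly the identification your two corrections reject. So your last sentence of the main-lobe paragraph, claiming the computation ``produces the first term of (\ref{average_misalignement_more})'', contradicts the corrections you just made: what you derive is a different (and arguably more self-consistent) expression than the one in the statement. To close the argument you must either adopt the paper's looser identification of the conditional expectation with (\ref{finalExpectedCapacity2}) as written, or explicitly state that your derivation gives $\frac{1}{2\theta_{m}}\left(\cdot\right)\big|_{g(\phi_{ML}/2)}^{g(0)}$ in place of the printed first term; as it stands the proposal does not establish the literal formula of the proposition.
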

\noindent
here $p_{m,m}$ is the probability of both Tx and Rx staying within their main lobes and $p_{m,s}$  represents Tx staying within its mainlobe while Rx staying outside its main lobe.

\begin{proof*}
\textnormal{If the maximum misalignment $|\theta_{m}|>1.3\phi$, we have two cases: (i)~when $|\theta|<1.3\phi$, with a probability $p_{m,m}=\left(\frac{1.3\phi_t}{\theta_{m}}\right)$, both Tx and Rx beams are pointing within the main lobe beamwidth. We represent the link capacity in this case by $\mathbb{E}[R||\theta|<1.3\phi]$ and as given by (\ref{finalExpectedCapacity2}). (ii)~if $|\theta_{m}|\geq|\theta|>1.3\phi$, then with  probability $p_{m,s}=\left(1-\frac{1.3\phi_t}{\theta_{m}}\right)$, Rx antenna points in the direction of side lobe. Since, we have assumed a constant side lob gain (recall (\ref{Eq:GaussianGain})), the capacity in this case, represented as $\mathbb{E}[R\mid |\theta|\geq 1.3\phi]$, can be evaluated using (\ref{jointEquation1}) by substituting the gain of Rx antenna by $G_s^{\phi_r}$. The link capacity can be written as  \small$\mathbb{E}[R||\theta|\geq 1.3\phi]=\eta\log_2\left(1+\frac{G^{\phi_t}_m(0)G^{\phi_r}_sG_0(\lambda,d)P_t}{N_0W}\right)$\normalsize.\newline
Considering the above two situations, the expression for average link  capacity is, \small $\mathbb{E}[R]=p_{m,m}\mathbb{E}[R||\theta|<1.3\phi]+p_{m,s}\mathbb{E}[R||\theta|\geq 1.3\phi]$\normalsize. By replacing the corresponding capacity values ($r_{m,m}^{\phi_t,\phi_r}$ and $r_{m,s}^{\phi_t, \phi_r}$), (\ref{average_misalignement_more}) can be calculated.
}
\end{proof*}
%
%
%\section{Performance Evaluation}\label{sec:performanceEvaluation}
% % % % % % % % % % % % % % % %
\section{Numerical Results and Discussions}\label{sec:results_discuss}
We use MATLAB to examine the effect of misalignment and searching overheads on mmWave link performance. We assume transmit power $P_t$ = 10\,mW,  distance between Tx and Rx $d$ = 5\,m,  carrier bandwidth $W$ = 2.16\,GHz and  the equal coarse sector beamwidth for both the Tx and Rx $\Omega$ = 90\degree. Using the IEEE 802.11ad specification   the transmission time \mbox{${T_p}$} of beam-training packet is assume to be 15.6$\mu$s. The allocated slot-time duration  $T_S$ takes two values of  10\,ms and 1\,s.

% para-break%
%
\begin{figure}[!h]
\centering
%\mbox{
\subfigure[$T_S$=10\,ms.]{ 
\psfrag{x}[cc][cc][0.80]{{Transmit and receive beamwidths (\textdegree)}}
\psfrag{y}[cc][cc][0.80]{{Capacity (bits/slot/Hz)}}
\includegraphics[width=0.2200\textwidth]{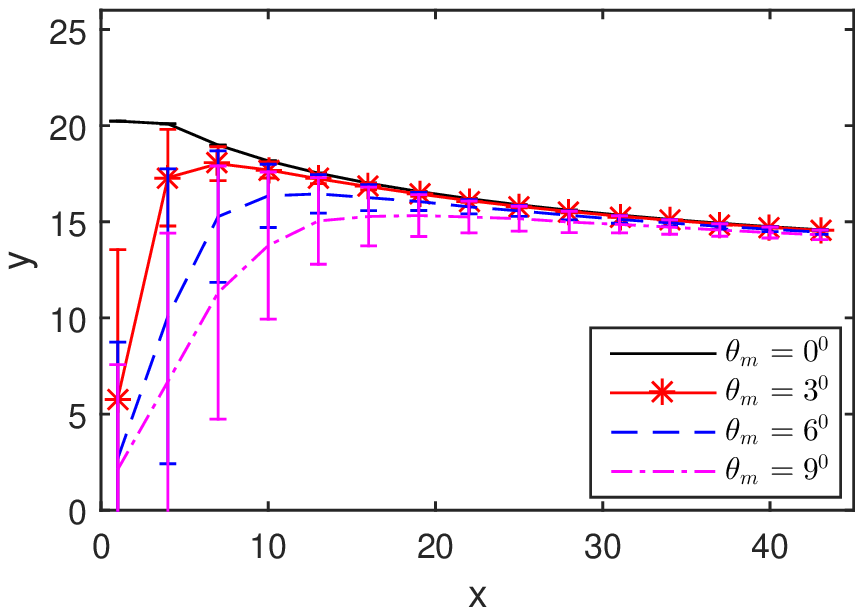}
\label{fig_Txpencilbeam_RxOmni_10ms}
}
%\quad
\subfigure[$T_S$e=1\,s.]{
\psfrag{x}[cc][cc][0.80]{{Transmit and receive beamwidths (\textdegree)}}
\psfrag{y}[cc][cc][0.80]{{Capacity (bits/slot/Hz)}}
%\centering
\includegraphics[width=0.2200\textwidth]{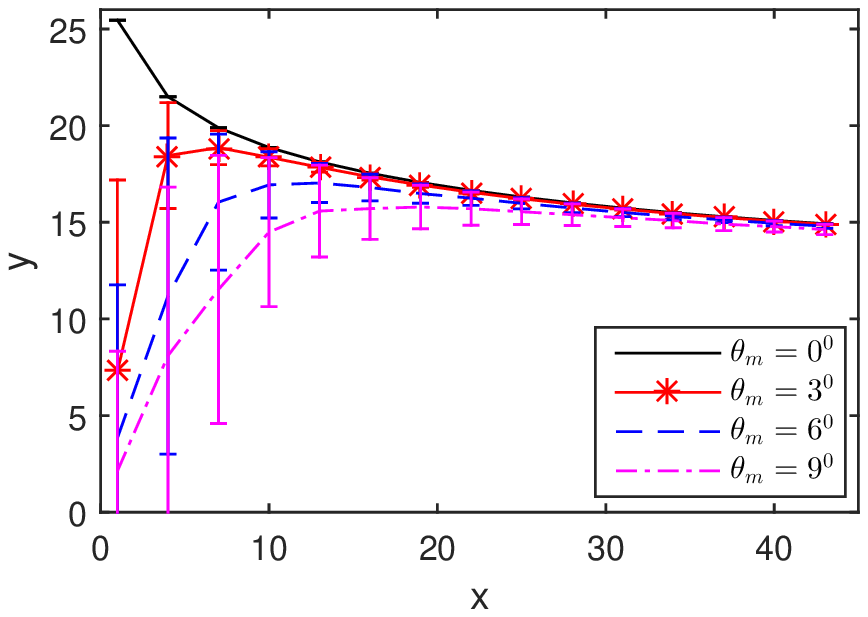}
\label{fig:Txpencilbeam_RxOmni_1sec}
}
%}
\caption{Link capacity for pencil beam reception and coarse-sector transmission.}
\label{fig:Txpencilbeam_RxOmni}
%\vspace{-5mm}
\end{figure}
For evaluation purpose, we consider two configurations as follows: (i)~ Rx uses narrow pencil beams and Tx only employ the coarse-sector beamwidth; and (ii)~ Tx and Rx both are capable of forming narrow beams with equal beamwidth $\phi$.

Fig.~\ref{fig:Txpencilbeam_RxOmni} shows the link capacity while using pencil beam reception and the coarse-sector transmission. As it can be observed from Fig.~\ref{fig_Txpencilbeam_RxOmni_10ms}, the optimum Rx beamwidth  are 3\degree, 7\degree, 10\degree\, and 15\degree\,  for  the corresponding maximum alignment errors  $\theta_{m}$ of  0\degree, 3\degree, 6\degree\, and 9\degree, respectively.  For $\theta_{m}$ = 9\degree, using 15\degree\, Rx beamwidth outperforms the  5\degree\, beamwidth by  60\%. We can also observe that as Rx beamwidth decreases, variations in link capacity are more pronounced.

Comparing Fig.~\ref{fig_Txpencilbeam_RxOmni_10ms}  and  Fig.~\ref{fig:Txpencilbeam_RxOmni_1sec}, for the perfect alignment ($\bar{{\theta}}$ = 0\degree), we can see that  $T_S$ = 1\,s outperforms $T_S$ = 10\,ms.  Since the Tx employ coarse-sector beamwidth, the  beam searching overhead is less impact-full when $T_S$=1\,s and $\theta_{m}$ = 0\degree. Therefore, if  alignment error does not exist, then using very narrow Rx beamwidth would provide the best link capacity given that  sufficiently large slot is allocated for high speed data transmission and the coarse sector Tx beamwidth is used.
%para-break %

Fig.~\ref{fig:Txpencilbeam_Rxpencil_Equal} presents the link capacity when both the Tx and Rx use narrow  beamwidths. Here the  misalignment sensitivity of the mmWave links is considerably  higher.  We can see that irrespective of the slot duration $T_S$, increasing the maximum alignment error  $\theta_{m}$ always impacts the link capacity which is clearly seen by the large gaps in the maximum  and minimum link capacity for smaller beamwidths. This proves the dominance of alignment errors over the high gain achieved by employing narrow beamwidths. 
%
% para-break%
\begin{figure}[]
\centering
%\mbox{
\subfigure[$T_S$=10\,ms.]{ 
\psfrag{x}[cc][cc][0.80]{{Transmit and receive beamwidths (\textdegree)}}
\psfrag{y}[cc][cc][0.80]{{Capacity (bits/slot/Hz)}}
\includegraphics[width=0.220\textwidth]{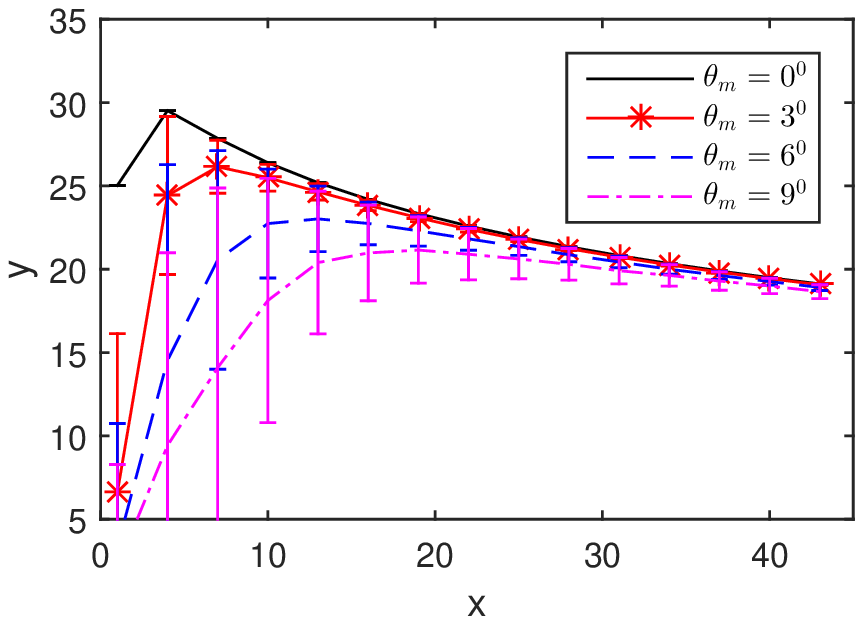}
\label{fig_Txpencilbeam_Rxpencil_Equal_10ms}
}
%\quad
\subfigure[$T_S$=1\,s.]{
\psfrag{x}[cc][cc][0.80]{{Transmit and receive beamwidths (\textdegree)}}
\psfrag{y}[cc][cc][0.80]{{Capacity (bits/slot/Hz)}}
%\centering
\includegraphics[width=0.220\textwidth]{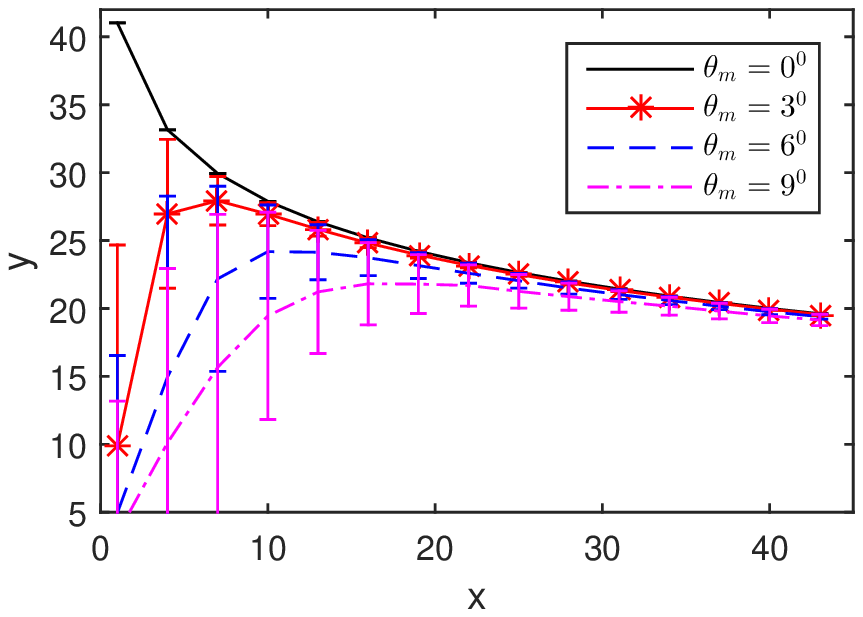}
\label{fig:Txpencilbeam_Rxpencil_Equal_1sec}
}
%}
\caption{Link capacity for  pencil beam transmission and reception.}
\label{fig:Txpencilbeam_Rxpencil_Equal}
%\vspace{-5mm}
\end{figure}

\section{Misalignment-aware Beamwidth Adaptation Mechanism}\label{sec:Misalignment_mechanism}
%\vspace{-4mm}
 Mechanisms that can adapt link beamwidths according to the anticipated alignment dynamics are highly desired for mmWave communications for sustained link performance. We propose  a simple adaptation mechanism (see Algorithm~\ref{ALG:misalignmentaware})  based on the observed received signal strength (RSSI) where Tx and Rx beamwidths are chosen in such a way that the link can adapt to alignment errors and a sustained link performance can be achieved. The algorithm is explained as follows.\newline 
 1)~The initial Tx and Rx beamwidths $\phi_{t} $ and  $\phi_r$ are determined by Eq.~(\ref{jointEquation2}) considering perfect alignment. In practical situations,  $\phi_{t} and \phi_r$ would be the putcome of $2^{nd}$ stage of beam-training.\newline
 2)~We define a threshold RSSI ($\mbox{RSSI}_{\mbox{th}}$). If the average RSSI~($\overline{\mbox{RSSI}}$)  during a time slot falls below the threshold $\mbox{RSSI}_{\mbox{th}}$, we adapt the link beamwidth by increasing it in small-steps of $\Delta\phi$. $\mbox{RSSI}_{\mbox{th}}$ is calculated using the averaging the RSSI considering antennas pointing randomly within the entire main lobe beamwidth, i.e.,  $\theta \in [-1.3\phi, 1.3\phi$].  
 %\begin{small}
 \begin{equation*}
 \mbox{RSSI}_{\mbox{th}}=\frac{1}{(\frac{2.6\phi}{\Delta{\theta}}+1)}\sum_{\theta=-1.3\phi}^{1.3\phi}G_{mt}^{\phi_t}(\theta)G_{mr}^{\phi_r}(\theta)G_{0}(\lambda, d)P_t.
 \end{equation*}
 %\end{small}
 Here ${\Delta{\theta}}$ is the sampling interval and $\frac{1}{(\frac{2.6\phi}{\Delta{\theta}}+1)}$ is the number of steps. In our evaluations, we considered ${\Delta{\theta}} =$ 2\textdegree.\newline
 3)~The beamwidth adaptation is repeated until $\overline{\mbox{RSSI}}$  reaches $\mbox{RSSI}_{\mbox{th}}$.

\begin{algorithm}
\caption{Beamwidth adaptation algorithm}
\label{ALG:misalignmentaware}
\begin{algorithmic}[1]
\STATE {Begin with the Tx/Rx beamwidths determined by (\ref{jointEquation2}) without considering alignment error for the given slot duration $T_S$};
\STATE {Monitor the average signal strength $\overline{\mbox{RSSI}}$ during transmission};
\IF {$\overline{\mbox{RSSI}}<{\mbox{RSSI}}_{\mbox{th}}$}
\STATE{$\phi_t=\phi_t+\Delta\phi$ and $\phi_r=\phi_r+\Delta\phi$};
\STATE{Go to Step 2};
\ELSE 
\STATE{Stop the beam adaptation mechanism};
\RETURN {$\phi_t, \phi_r$};
\ENDIF
\end{algorithmic}
\end{algorithm}

 To evaluate the proposed scheme we consider two Tx/Rx beamwidth configurations. The $1^{st}$ configuration uses $\phi$=2\textdegree\,  and the $2nd$ uses $\phi$=7\textdegree. The initial Tx/Rx beamwidths are deduced from (\ref{jointEquation2}) considering $\theta=0$, slot length  $T_S$=1\,s and $T_S$=10\,ms. All the other simulation parameters are as used in the previous section. 

For beamwidth $\phi$=2\textdegree\,, we consider maximum alignment errors  $\theta_{m}$=2\textdegree\, and $\theta_{m}$=10\degree. For beamwidth $\phi$=7\degree, maximum  alignment errors $\theta_{m}$=7\degree\, and $\theta_{m}$=15\degree\, are assumed. We run the adaptation mechanism for 10 consecutive slots. It can be inferred from  Fig.~\ref{fig:misalignmentawareALGO} that the  adaptation mechanism gradually attains the highest achievable  capacity. In addition, the beamwidth adaptation mechanism is less likely to let the re-beamforming be invoked. 

 We can see that for 7\degree\, beamwidth, the proposed mechanism considerably improves the link capacity by 20\% - 100\% (see Fig.~\ref{fig:mialignmentAlgo_BW15_mis30} and Fig.~\ref{fig:mialignmentAlgo_BW15_mis45}). For the 2\degree\, beamwidth, similar improvements in link capacity are registered~(see Fig.~\ref{fig:mialignmentAlgo_BW5_mis20} and Fig.~\ref{fig:mialignmentAlgo_BW5_mis20}). By comparing Fig.~\ref{fig:mialignmentAlgo_BW5_mis10} and  Fig.~\ref{fig:mialignmentAlgo_BW5_mis20}, we  see that  in Fig.~\ref{fig:mialignmentAlgo_BW5_mis10} it takes  two slots to attain the peak link capacity while in Fig.~\ref{fig:mialignmentAlgo_BW5_mis20} requires five slots to attain the peak link capacity. This is the consequence of higher  alignment errors in the latter case where beamwidth adaptation requires more slots to adjust the beamwidths according to the maximum alignment error.  It is important to note that  we have always assumed that the link deterioration is always caused by alignment errors. On the other hand, mmWave links are highly susceptible to blockage. In that case, appropriate mechanisms are required to identify if link is disrupted due to  blockage or misalignment.\normalcolor
\begin{figure}[]
\centering
%\mbox{
\subfigure[${\phi}$=7\degree,  $\theta_{m}$=7\degree.]{
\psfrag{x}[cc][cc][0.80]{{Slot number}}
\psfrag{y}[cc][cc][0.80]{{Capacity}}
%\centering
\includegraphics[width=0.22\textwidth]{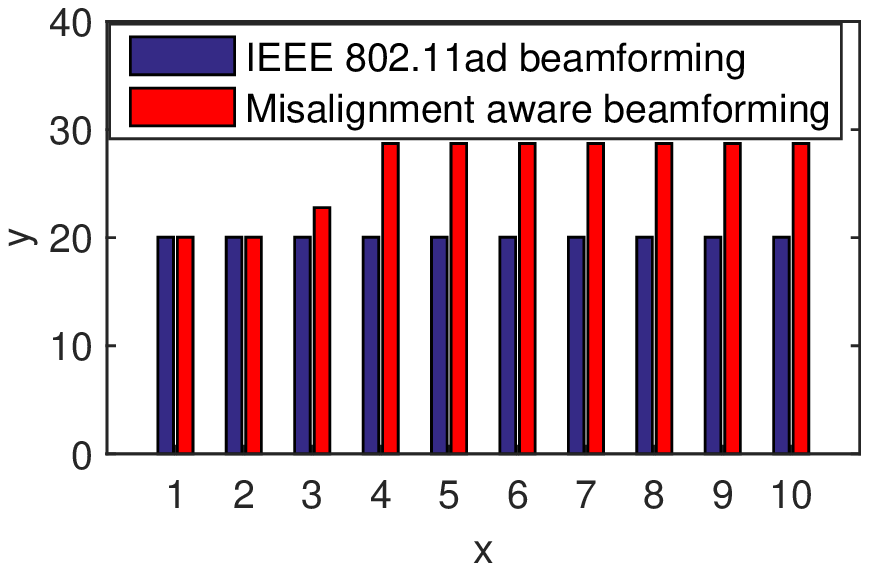}
\label{fig:mialignmentAlgo_BW15_mis30}
}
\subfigure[${\phi}$=7\degree,  $\theta_{m}$=15\degree.]{
\psfrag{x}[cc][cc][0.80]{{Slot number}}
\psfrag{y}[cc][cc][0.80]{{Capacity}}
%\centering
\includegraphics[width=0.22\textwidth]{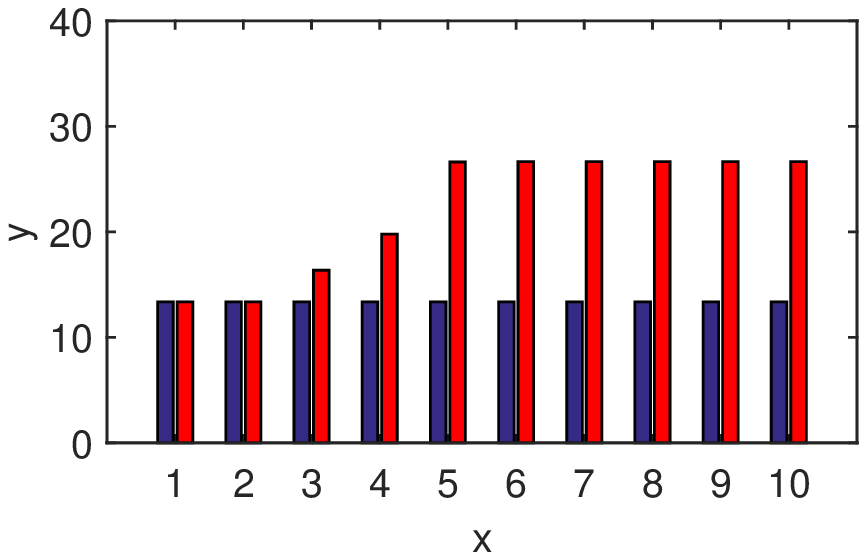}
\label{fig:mialignmentAlgo_BW15_mis45}
}
\subfigure[${\phi}$=2\degree,  $\theta_{m}$=2\degree.]{ 
\psfrag{x}[cc][cc][0.80]{{Slot number}}
\psfrag{y}[cc][cc][0.80]{{Capacity}}
\includegraphics[width=0.220\textwidth]{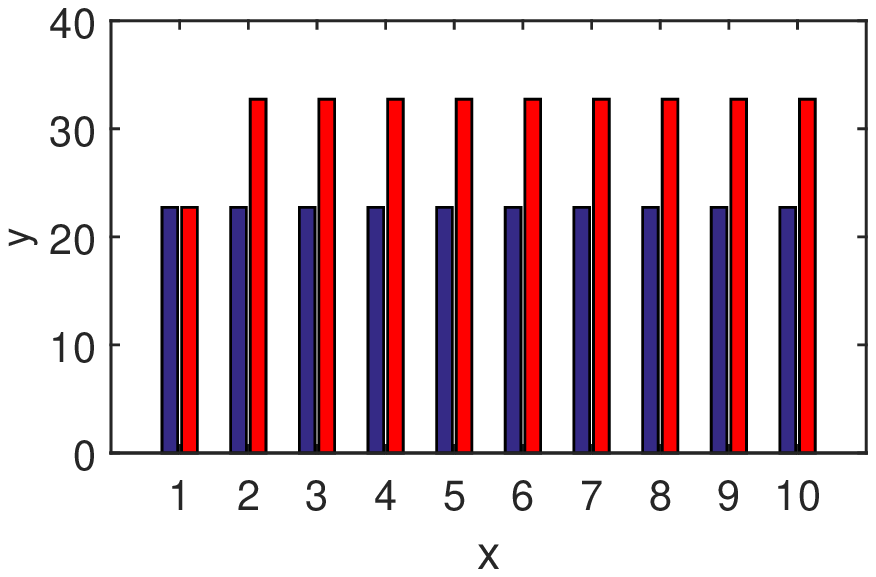}
\label{fig:mialignmentAlgo_BW5_mis10}
}
%\quad
\subfigure[${\phi}$=2\degree,  $\theta_{m}$=10\degree.]{
\psfrag{x}[cc][cc][0.80]{{Slot number}}
\psfrag{y}[cc][cc][0.80]{Capacity}
%\centering
\includegraphics[width=0.220\textwidth]{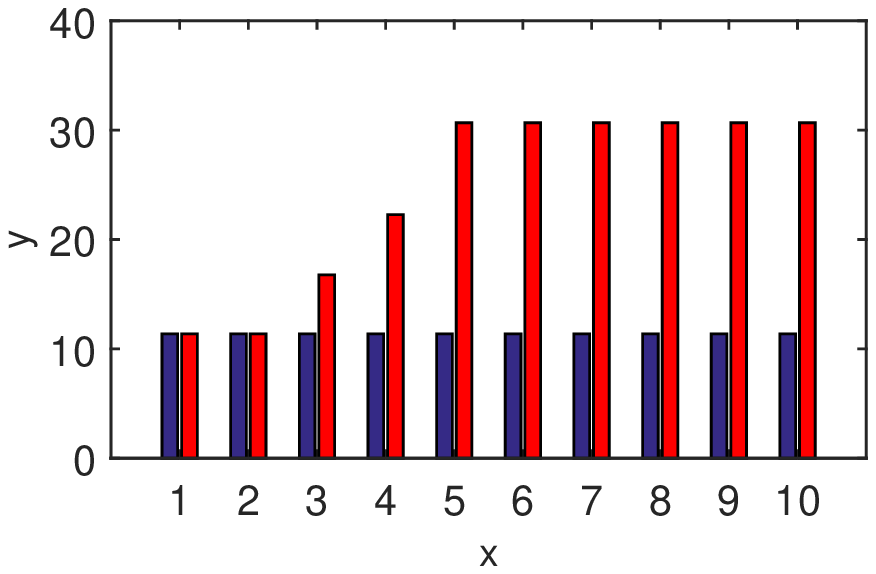}
\label{fig:mialignmentAlgo_BW5_mis20}
}
\caption{Performance of misalignment-aware beamwidth adaptation mechanism. Here capacity is in (bits/slot/Hz).}
\label{fig:misalignmentawareALGO}
%\vspace{-5mm}
\end{figure}
\section{Conclusion}\label{sec:conclusion}
\normalsize
Ultra-low latency and extreme-high reliability are two defining characteristics of Tactile Internet (TI).
Since radio interface-diversity is highly desired for attaining the carrier-grade reliability in TI, Millimeter Wave (mmWave) communication is an appropriate candidates due to its unique propagation characteristics and potential for high data rate transmissions. However, using directional antennas makes mmWave susceptible to link instability and high latency. In this paper, we proposed a capacity modeling framework for the mmWave links which provided detailed insights in to the impact of using narrow beamwidth links.  Our evaluations suggest that to exploit the multi-Gbps mmWave wireless links for reliable high data rate TI applications, beamwidth of directional antennas must be carefully decided. When alignment errors are likely to happen, using moderately narrow beams  instead of very narrow beams is beneficial. We have also proposed a  beamwidth-adaptation mechanism that is able to significantly stabilize the performance of mmWave link which is quintessential for TI applications.\newline 
%\vspace{-3mm}
\bibliographystyle{IEEEtran}
\bibliography{mmWaveSchedulingref}

% Generated by IEEEtran.bst, version: 1.13 (2008/09/30)
\begin{thebibliography}{10}
\providecommand{\url}[1]{#1}
\csname url@samestyle\endcsname
\providecommand{\newblock}{\relax}
\providecommand{\bibinfo}[2]{#2}
\providecommand{\BIBentrySTDinterwordspacing}{\spaceskip=0pt\relax}
\providecommand{\BIBentryALTinterwordstretchfactor}{4}
\providecommand{\BIBentryALTinterwordspacing}{\spaceskip=\fontdimen2\font plus
\BIBentryALTinterwordstretchfactor\fontdimen3\font minus
  \fontdimen4\font\relax}
\providecommand{\BIBforeignlanguage}[2]{{%
\expandafter\ifx\csname l@#1\endcsname\relax
\typeout{** WARNING: IEEEtran.bst: No hyphenation pattern has been}%
\typeout{** loaded for the language `#1'. Using the pattern for}%
\typeout{** the default language instead.}%
\else
\language=\csname l@#1\endcsname
\fi
#2}}
\providecommand{\BIBdecl}{\relax}
\BIBdecl

\bibitem{fettweis2014tactile}
G.~P. Fettweis, ``{The Tactile Internet: Applications and challenges},''
  \emph{IEEE Vehicular Technology Magazine}, vol.~9, no.~1, pp. 64--70, 2014.

\bibitem{TI_IEEE}
O.~{Holland}, E.~{Steinbach}, R.~V. {Prasad}, Q.~{Liu}, Z.~{Dawy}, A.~{Aijaz},
  N.~{Pappas}, K.~{Chandra}, V.~S. {Rao}, S.~{Oteafy}, M.~{Eid}, M.~{Luden},
  A.~{Bhardwaj}, X.~{Liu}, J.~{Sachs}, and J.~{Araújo}, ``The ieee 1918.1
  “tactile internet” standards working group and its standards,''
  \emph{Proceedings of the IEEE}, vol. 107, no.~2, pp. 256--279, Feb 2019.

\bibitem{aijaz2017realizing}
A.~Aijaz, M.~Dohler, A.~H. Aghvami, V.~Friderikos, and M.~Frodigh, ``{Realizing
  the tactile Internet: Haptic communications over next generation 5G cellular
  networks},'' \emph{IEEE Wireless Communications}, vol.~24, no.~2, pp. 82--89,
  2017.

\bibitem{interfaceDiversity}
P.~{Popovski}, J.~J. {Nielsen}, C.~{Stefanovic}, E.~d.~{Carvalho}, E.~{Strom},
  K.~F. {Trillingsgaard}, A.~{Bana}, D.~M. {Kim}, R.~{Kotaba}, J.~{Park}, and
  R.~B. {Sorensen}, ``{Wireless Access for Ultra-Reliable Low-Latency
  Communication: Principles and Building Blocks},'' \emph{IEEE Network},
  vol.~32, no.~2, pp. 16--23, March 2018.

\bibitem{massiveURRLC}
G.~{Durisi}, T.~{Koch}, and P.~{Popovski}, ``{Toward Massive, Ultrareliable,
  and Low-Latency Wireless Communication With Short Packets},''
  \emph{Proceedings of the IEEE}, vol. 104, no.~9, pp. 1711--1726, Sep. 2016.

\bibitem{MAC_UURLLC}
G.~{Pocovi}, B.~{Soret}, K.~I. {Pedersen}, and P.~{Mogensen}, ``{MAC layer
  enhancements for ultra-reliable low-latency communications in cellular
  networks},'' in \emph{2017 IEEE International Conference on Communications
  Workshops (ICC Workshops)}, May 2017, pp. 1005--1010.

\bibitem{5G1}
F.~Boccardi, R.~Heath, A.~Lozano, T.~Marzetta, and P.~Popovski, ``{Five
  disruptive technology directions for 5G},'' \emph{Communications Magazine,
  IEEE}, vol.~52, no.~2, pp. 74--80, February 2014.

\bibitem{cogcel}
K.~{Chandra}, R.~V. {Prasad}, B.~{Quang}, and I.~G. {M. M. Niemegeers},
  ``Cogcell: cognitive interplay between 60 ghz picocells and 2.4/5 ghz
  hotspots in the 5g era,'' \emph{IEEE Communications Magazine}, vol.~53,
  no.~7, pp. 118--125, July 2015.

\bibitem{mmwavebeamforming}
W.~Roh, J.-Y. Seol, J.~Park, B.~Lee, J.~Lee, Y.~Kim, J.~Cho, K.~Cheun, and
  F.~Aryanfar, ``{Millimeter-wave beamforming as an enabling technology for 5G
  cellular communications: theoretical feasibility and prototype results},''
  \emph{Communications Magazine, IEEE}, vol.~52, no.~2, pp. 106--113, February
  2014.

\bibitem{7158056}
A.~W. {Doff}, K.~{Chandra}, and R.~V. {Prasad}, ``Sensor assisted movement
  identification and prediction for beamformed 60 ghz links,'' in \emph{2015
  12th Annual IEEE Consumer Communications and Networking Conference (CCNC)},
  Jan 2015, pp. 648--653.

\bibitem{availabilityAnalysis}
H.~D. {Schotten}, R.~{Sattiraju}, D.~G. {Serrano}, Z.~{Ren}, and P.~{Fertl},
  ``{Availability indication as key enabler for ultra-reliable communication in
  5G},'' in \emph{2014 European Conference on Networks and Communications
  (EuCNC)}, June 2014, pp. 1--5.

\bibitem{iee:IEEE802.15.3c}
``\textsc{IEEE} $802.15.3c$ working group, tgc3,,'' \emph{Report}.

\bibitem{IEEE802.11ad}
``Draft standard- part 11: Wireless lan medium access control (mac) and
  physical layer (phy) specifications - amendment 4: Enhancements for very high
  throughput in the 60ghz band,'' \emph{IEEE P802.11adTM/D9.0}, July 2012.

\bibitem{Mezzavilla_latency}
R.~{Ford}, M.~{Zhang}, M.~{Mezzavilla}, S.~{Dutta}, S.~{Rangan}, and
  M.~{Zorzi}, ``{Achieving Ultra-Low Latency in 5G Millimeter Wave Cellular
  Networks},'' \emph{IEEE Communications Magazine}, vol.~55, no.~3, pp.
  196--203, March 2017.

\bibitem{Ming_latency}
{G. {Yang} and M. {Xiao} and M. {Alam} and Y. {Huang}}, ``{Low-Latency
  Heterogeneous Networks with Millimeter-Wave Communications},'' \emph{IEEE
  Communications Magazine}, vol.~56, no.~6, pp. 124--129, June 2018.

\bibitem{Ming_latency2}
G.~{Yang}, M.~{Xiao}, and H.~V. {Poor}, ``{Low-Latency Millimeter-Wave
  Communications: Traffic Dispersion or Network Densification?}'' \emph{IEEE
  Transactions on Communications}, vol.~66, no.~8, pp. 3526--3539, Aug 2018.

\bibitem{debbah_latency}
T.~K. {Vu}, C.~{Liu}, M.~{Bennis}, M.~{Debbah}, M.~{Latva-aho}, and C.~S.
  {Hong}, ``{Ultra-Reliable and Low Latency Communication in mmWave-Enabled
  Massive MIMO Networks},'' \emph{IEEE Communications Letters}, vol.~21, no.~9,
  pp. 2041--2044, Sep. 2017.

\bibitem{ShivendraPanwar}
{Jain, Ish Kumar and Kumar, Rajeev and Panwar, Shivendra}, ``{Can Millimeter
  Wave Cellular Systems provide High Reliability and Low Latency? An analysis
  of the impact of Mobile Blockers},'' \emph{arXiv preprint arXiv:1807.04388},
  2018.

\bibitem{adaptive}
K.~{Chandra}, R.~V. {Prasad}, I.~G. M.~M. {Niemegeers}, and A.~R. {Biswas},
  ``Adaptive beamwidth selection for contention based access periods in
  millimeter wave wlans,'' in \emph{2014 IEEE 11th Consumer Communications and
  Networking Conference (CCNC)}, Jan 2014, pp. 458--464.

\bibitem{80211ad}
K.~{Chandra}, R.~V. {Prasad}, and I.~{Niemegeers}, ``Performance analysis of
  ieee 802.11ad mac protocol,'' \emph{IEEE Communications Letters}, vol.~21,
  no.~7, pp. 1513--1516, July 2017.

\bibitem{efficientbeamswitch}
B.~Li, Z.~Zhou, W.~Zou, X.~Sun, and G.~Du, ``{On the Efficient Beam-Forming
  Training for 60GHz Wireless Personal Area Networks},'' \emph{Wireless
  Communications, IEEE Transactions on}, vol.~12, no.~2, pp. 504--515, February
  2013.

\bibitem{BBS_Nitsche}
T.~Nitsche, A.~Flores, E.~Knightly, and J.~Widmer, ``{Steering with Eyes
  Closed: mm-Wave Beam Steering without In-Band Measurement},'' in
  \emph{Proceedings of IEEE INFOCOM}, April 2015.

\bibitem{beamsearchingKTH}
\BIBentryALTinterwordspacing
H.~S. Ghadikolaei, L.~Gkatzikis, and C.~Fischione, ``Beam-searching and
  transmission scheduling in millimeter wave communications,'' \emph{IEEE ICC},
  2015. [Online]. Available: \url{http://arxiv.org/abs/1501.02516}
\BIBentrySTDinterwordspacing

\bibitem{Misalignment_Wildeman}
J.~Wildman, P.~Nardelli, M.~Latva-Aho, and S.~Weber, ``{On the Joint Impact of
  Beamwidth and Orientation Error on Throughput in Directional Wireless Poisson
  Networks},'' \emph{Wireless Communications, IEEE Transactions on}, vol.~13,
  no.~12, pp. 7072--7085, Dec 2014.

\bibitem{Ergodic_RWHeath}
A.~Thornburg and R.~W. Heath, ``{Ergodic capacity in mmWave ad hoc network with
  imperfect beam alignment},'' in \emph{Military Communications Conference,
  MILCOM 2015 - 2015 IEEE}, Oct 2015, pp. 1479--1484.

\bibitem{yu2017coverage}
X.~Yu, J.~Zhang, M.~Haenggi, and K.~B. Letaief, ``Coverage analysis for
  millimeter wave networks: The impact of directional antenna arrays,''
  \emph{IEEE journal on selected areas in communications}, vol.~35, no.~7, pp.
  1498--1512, 2017.

\bibitem{liu2017millimeter}
C.~Liu, M.~Li, S.~V. Hanly, I.~B. Collings, and P.~Whiting, ``{Millimeter wave
  beam alignment: Large deviations analysis and design insights},'' \emph{IEEE
  journal on selected areas in communications}, vol.~35, no.~7, pp. 1619--1631,
  2017.

\bibitem{boyd2009convex}
S.~Boyd and L.~Vandenberghe, \emph{{Convex optimization}}.\hskip 1em plus 0.5em
  minus 0.4em\relax Cambridge university press, 2009.

\bibitem{kopa2012characterization}
M.~Kopa and P.~Lachout, ``{Characterization of uniformly quasi-concave
  functions},'' in \emph{Proceedings of the 30th International Conference on
  Mathematical Methods in Economics, Karvin{\'a}, Silesian University in
  Opava}, 2012, pp. 449--454.

\bibitem{grimmett2001probability}
G.~Grimmett and D.~Stirzaker, \emph{Probability and random processes}.\hskip
  1em plus 0.5em minus 0.4em\relax Oxford university press, 2001.

\end{thebibliography}

\appendix
%%%%%
In order to prove the quasi-concavity of function ${1 - (\frac{{2\pi }}{{{\Omega _t}}} + \frac{{2\pi }}{{{\Omega _r}}})\frac{{{T_p}}}{{{T_s}}} - (\frac{{{\Omega _t}}}{{{x}}})\frac{{{T_p}}}{{{T_s}}}}$, given the theorem in \eqref{theo:strict_conacve}, we consider
\begin{align} \label{eq:prod_concave} \notag
&1 - (\frac{{2\pi }}{{{\Omega _t}}} + \frac{{2\pi }}{{{\Omega _r}}})\frac{{{T_p}}}{{{T_s}}} - (\frac{{{\Omega _t}}}{x})\frac{{{T_p}}}{{{T_s}}} > a \Rightarrow x > \frac{{{\Omega _t}\frac{{{T_p}}}{{{T_s}}}}}{{1 - (\frac{{2\pi }}{{{\Omega _t}}} + \frac{{2\pi }}{{{\Omega _r}}})\frac{{{T_p}}}{{{T_s}}} - a}}\\
&1 - (\frac{{2\pi }}{{{\Omega _t}}} + \frac{{2\pi }}{{{\Omega _r}}})\frac{{{T_p}}}{{{T_s}}} - (\frac{{{\Omega _t}}}{y})\frac{{{T_p}}}{{{T_s}}} > a \Rightarrow y > \frac{{{\Omega _t}\frac{{{T_p}}}{{{T_s}}}}}{{1 - (\frac{{2\pi }}{{{\Omega _t}}} + \frac{{2\pi }}{{{\Omega _r}}})\frac{{{T_p}}}{{{T_s}}} - a}}
\end{align}
Therefore,  
\begin{align} \notag
&\lambda x + (1 - \lambda )y > \frac{{{\Omega _t}\frac{{{T_p}}}{{{T_s}}}}}{{1 - (\frac{{2\pi }}{{{\Omega _t}}} + \frac{{2\pi }}{{{\Omega _r}}})\frac{{{T_p}}}{{{T_s}}} - a}}\\ \notag
&\Rightarrow 1 - (\frac{{2\pi }}{{{\Omega _t}}} + \frac{{2\pi }}{{{\Omega _r}}})\frac{{{T_p}}}{{{T_s}}} - (\frac{{{\Omega _t}}}{{\lambda x + (1 - \lambda )y}})\frac{{{T_p}}}{{{T_s}}} > a.
\end{align}
In addition, we need to prove that ${1 - (\frac{{2\pi }}{{{\Omega _t}}} + \frac{{2\pi }}{{{\Omega _r}}})\frac{{{T_p}}}{{{T_s}}} - (\frac{{{\Omega _t}}}{{{x}}})\frac{{{T_p}}}{{{T_s}}}}$ is also nonnegative. In order to have that, 
\begin{align} \label{eq:inequ}
x \ge \frac{{{\Omega _t}}}{{\frac{{{T_p}}}{{{T_s}}} - (\frac{{2\pi }}{{{\Omega _t}}} + \frac{{2\pi }}{{{\Omega _r}}})}}
\end{align}
should hold. Given the value of the parameters $T_p$, $T_s$, $\Omega _t$ and $\Omega _r$, the right hand side of the inequality is always negative. Therefore, \eqref{eq:inequ} holds.
%%%%%%%%%%%%%%%%%
%%%%%%%%%%%%%%%%%%
\end{document}